\documentclass{article}
\usepackage{amsmath,amssymb,amsthm,geometry,microtype,enumitem}
\geometry{margin=1in}

\usepackage{natbib}

\usepackage{tikz}
\usepackage{pgf} 
\usetikzlibrary{patterns,automata,arrows,shapes,snakes,topaths,trees,backgrounds,positioning,through,calc}
\theoremstyle{definition}
\newtheorem{theorem}{Theorem}[section]
\newtheorem{proposition}[theorem]{Proposition}

\newtheorem{remark}[theorem]{Remark}

\definecolor{medgreen}{rgb}{0.0, 0.75, 0.0}

\usepackage{comment}

\usepackage{hyperref}
\hypersetup{colorlinks=true, citecolor=blue, linkcolor=blue}  

\setcitestyle{round}

\usepackage{setspace}

\begin{document}
\title{Axiomatizations of \\ a simple Condorcet voting method \\ for Final Four and Final Five elections}
\author{Wesley H. Holliday\\ {\small University of California, Berkeley}}
\date{September 13, 2025}

\onehalfspace

\maketitle

\begin{abstract}
Proponents of Condorcet voting face the question of what to do in the rare case when no Condorcet winner exists. Recent work provides compelling arguments for the rule that should be applied in three-candidate elections, but already with four candidates,  many rules appear reasonable. In this paper, we consider a recent proposal of a simple Condorcet voting method for Final Four political elections. Our question is what normative principles could support this simple form of Condorcet voting. When there is no Condorcet winner, one natural principle is to pick the candidate who is \textit{closest} to being a Condorcet winner. Yet there are multiple plausible ways to define closeness, leading to different results. Here we take the following approach: identify a relatively uncontroversial sufficient condition for one candidate to be closer than another to being a Condorcet winner; then use other principles to help settle who wins in cases when that condition alone does not. We prove that our principles uniquely characterize the simple Condorcet voting method for Final Four~elections. This analysis also points to a new way of extending the method to elections with five or more candidates that is simpler than an extension previously considered. The new proposal is to elect the candidate with the most head-to-head wins, and if multiple candidates tie for the most wins, then elect the one who has the smallest head-to-head loss. We provide additional principles sufficient to characterize this simple method for Final Five elections.
\end{abstract}

\section{Introduction}\label{Intro}

Condorcet voting methods are based on two simple ideas, dating back to Condorcet \citeyearpar{Condorcet1785}. First, voters should be allowed to rank the candidates in an election. Second, if one candidate defeats each of the other candidates in head-to-head majority comparisons based on the voters' rankings, then that candidate---called the \textit{Condorcet winner}---should win the election. Condorcet voting methods differ from each other only in their plan for what to do in the rare case when no Condorcet winner exists.\footnote{Out of over 300 single-winner U.S. political elections with ranked ballots and at least three non-write-in candidates documented by \citealt{Otis2022}, we know of only two cases in which a Condorcet winner did not exist (see \citealt[Note 7]{Holliday2025}).} Many plans have been proposed, some of which are quite sophisticated. As a result of the innovativeness of voting theorists in developing different Condorcet methods (see, e.g., \citealt{Daunou1803}, \citealt{Nanson1882}, \citealt{Baldwin1926}, \citealt{Kemeny1959},  \citealt{Tideman1987}, \citealt{Schulze2011}, \citealt{HP2023,HP2023b}, \citealt{Doring2025}), there is now a caricature of Condorcet voting methods as highly complicated. Poundstone \citeyearpar[p.~223]{Poundstone2008} paints this picture amusingly:
\begin{quote} Maybe the biggest hitch with Condorcet voting is the simplicity issue. Condorcet supporters talk up the populist appeal of the John Wayne, last-man-standing premise. Soccer moms and NASCAR dads will nod their heads to that. The thing is, it's impossible to explain Condorcet voting any further without talking about cycles and “beat-paths” and “cloneproof Schwartz sequential dropping.” This is where the android's mask falls off and it's all wires and microchips inside. Non-Ph.D.s run screaming for the exits. So far, Condorcet voting has tended to appeal to the kind of people who can write JavaScript code for it.\end{quote}
For uses of voting methods by clubs, committees, professional associations, etc., there may be no objection to the use of sophisticated Condorcet methods, but for \textit{political} uses of voting methods, simplicity is key. In recent years, a number of scholars have argued for the use of Condorcet voting in political elections in the United States (see, e.g., \citealt{MaskinDasgupta2004}, \citealt{Maskin2017a,Maskin2017b}, \citealt{Foley2022,Foley2024}, \citealt{Foley2021,Foley2023}, \citealt{Atkinson2024}, \citealt{Robinette2024}). Fortunately, contrary to the impression given by Poundstone, there are Condorcet voting methods that are easy to explain to voters and policymakers. Consider the following method, which does not require a Ph.D. to understand:
\begin{itemize}
\item Proposed method: elect the candidate with the most head-to-head wins; if multiple candidates tie for the most wins, elect the one who has the smallest head-to-head loss in terms of margin of votes.
\end{itemize}
This method was proposed in \citealt{Holliday2025} for Final Four general elections, of the kind held in Alaska (\citealt{Brooks2020}), which feature four candidates chosen in a previous election to compete in the general election. The method was motivated by its simplicity, by its ability (like other Condorcet methods) to elect candidates who are highly representative of the electorate (\citealt{HP2024SocUtil}), and by its satisfaction of several desirable axioms from voting theory, such as Immunity to Spoilers (\citealt{HP2023}). 

There is another kind of motivation for a voting method that can be given: a complete  \textit{axiomatic characterization} showing that the given method is the \textit{only} possible voting method satisfying some list of axioms. Insofar as the axioms formalize \textit{desirable} properties of a voting method, such a characterization helps justify the use of the method. Of course, such a characterization does not decisively settle that one should use the voting method in question, since other axiomatic characterizations of other voting methods highlight the strengths of those other methods.\footnote{For axiomatic characterizations of other ranked voting methods, see, e.g., \citealt{Young1974}, \citealt{Henriet1985}, \citealt{Merlin2003}, \citealt{Freemanetal2014}, and \citealt{Ding2022}.} But it does help one understand the unique package of benefits of using the method in question. Recently, Condorcet's \citeyearpar{Condorcet1785} own voting method for three-candidate elections was axiomatically characterized in \citealt{HP2025}, as an extension of May's \citeyearpar{May1952} famous axiomatic characterization of majority rule in two-candidate elections.\footnote{One of the axioms that Holliday and Pacuit \citeyearpar{HP2025} add to May's axioms is the axiom of Immunity to Spoilers mentioned above, which we will use in Section~\ref{Five}. For a different axiomatization of Condorcet's method for three-candidate elections, not based on May's axioms, see \citealt{Brandt2025}.} Many Condorcet methods, including the proposed method above (as well as those in \citealt{Simpson1969}, \citealt{Kramer1977}, \citealt{Tideman1987}, \citealt{Schulze2011}, \citealt{HP2023,HP2023b}, \citealt{Doring2025}), are equivalent to Condorcet's method when restricted to three-candidate elections (ignoring head-to-head ties): simply elect the Condorcet winner, if there is one, and otherwise elect the candidate with the smallest loss. This convergence in the case of three candidates, as well as axiomatic considerations (\citealt{HP2025}, \citealt{Brandt2025}), provides compelling arguments for the use of Condorcet's method in three-candidate elections. However, as soon as we move to four candidates, the various Condorcet methods diverge, which motivates axiomatic analyses of their differences.

In this paper, we provide an axiomatic characterization of the proposed method above for Final Four elections. When there is no Condorcet winner in an election, one natural principle is to pick the candidate who is \textit{closest} to being a Condorcet winner. Yet there are multiple plausible ways to define closeness, leading to different results (\citealt{Elkind2016}). Here we take the following approach: identify a relatively uncontroversial sufficient condition for one candidate to be closer than another to being a Condorcet winner; then use other principles to help settle who wins in cases when that condition alone does not. We prove that our principles uniquely characterize the proposed method for Final Four~elections. 

Our analysis also points to a way of extending the method to elections with five or more candidates that is different from the extension briefly contemplated in \citealt[\S~2.3]{Holliday2025}. For a four-candidate election, there is always at least one candidate with at most one loss,\footnote{\label{AtMostOne}If each candidate had two losses, then there would be eight total losses, but there are only six head-to-head matches between four candidates.} whereas in a five-candidate election, it is possible for every candidate to suffer two losses (see Figure~\ref{Penta}), which leads to a choice point. In \citealt[\S~2.3]{Holliday2025}, the proposed method for Final Four elections was generalized to elections with five or more candidates in such a way that if multiple candidates tie for the most wins, then the one whose \textit{worst loss} is \textit{smallest} is elected. While that is a natural idea, borrowed from the Minimax voting method (\citealt{Simpson1969}, \citealt{Kramer1977}), here we will give a motivation for the simpler definition in the bullet point above: among the candidates tied for the most wins, we look at each candidate's smallest loss (not their worst loss) and elect the candidate with the smallest loss.\footnote{Although we will not go into the subtleties of further tie-breaking in this paper, here are two reasonable rules for handling extremely rare cases: first, a head-to-head tie between two candidates (i.e., exactly the same number of voters rank candidate $A$ above candidate $B$ as rank $B$ above $A$) counts as 1/2 of a win for each of the two candidates; second, if among the candidates tied for the most wins, there are multiple candidates with the smallest loss, then we break this tie in favor of the one whose \textit{second}-smallest loss is smallest, and so on. If even such lexicographic comparison of the candidates' losses does not break the tie, then a random device could be used to select from among the still-tied candidates.}  The basic idea is that among the candidates tied for the most wins, the one with the smallest loss is \textit{closest} to being an outright winner based on their number of wins, since all it would take is flipping that small loss. We will give a complete axiomatic characterization of this simple extension of the proposed method to Final Five elections by adding two axioms, one of which is the Immunity to Spoilers axiom originally used to motivate the proposed method. 

It would of course be desirable to axiomatically characterize the proposed method for any number of candidates. But since efforts to implement Final Four or Final Five general elections with ranked ballots currently have the most momentum (see \citealt{Gehl2020}, \citealt{Gehl2023}), these are the most important cases to study for relevance to near-term political applications of Condorcet voting methods.

Our axiomatic analysis also helps to distinguish the proposed method from another voting method introduced in \citealt{Holliday2025} (Notes 11 and 22):
\begin{itemize}
\item Variant method: elect the candidate with the most head-to-head wins; if multiple candidates tie for the most wins, then among the candidates in that group, elect the one who beats each of the others in that group head-to-head; if there is no such candidate, then elect the candidate in that group with the smallest head-to-head loss to another candidate in that group.
\end{itemize}
In the case of Final Four elections, this variant method delivers a different result from the proposed method only in an extremely rare type of election,\footnote{For justification of this claim of rarity, see Table~1 in \citealt{Holliday2025}.} dubbed the LS four cycle, which we will discuss below. In \citealt{Holliday2025}, Note 22, it was claimed that ``One can give normative arguments in favor of the proposed method's choice in the LS four cycle, but the greater simplicity of the proposed method's definition and the improbability of the LS four cycle seem already sufficient for favoring the proposed method''  over the variant method. In this paper, we will identify such normative arguments in terms of axioms satisfied by the proposed method but violated by the variant method, namely the axioms that we call Proximity to Condorcet, Independence of Irrelevant Defeats, and Immunity to Spoilers.

The rest of the paper is organized as follows. In Section~\ref{Prelim}, we review some preliminaries from voting theory (for an introduction, see \citealt{Zwicker2016} or \citealt{Pacuit2019}). In Section~\ref{Axioms}, we introduce the axioms that will uniquely characterize the proposed method for Final Four elections. We prove this characterization result in Section~\ref{Proof}. By adding two more axioms, we uniquely characterize the proposed method also in the case of Final Five elections in Section~\ref{Five}. We conclude in Section~\ref{Conclusion}.

\section{Preliminaries}\label{Prelim}

Given a finite set $\mathsf{V}$ of voters and a finite set $\mathsf{C}$ of candidates, a \textit{preference profile} $\mathbf{P}$ for $\mathsf{V}$ and $\mathsf{C}$ assigns to each voter in $\mathsf{V}$ a ranking of $\mathsf{C}$.\footnote{Formally, by a `ranking' we mean a complete and transitive binary relation on $\mathsf{C}$.} In political elections with ranked ballots, typically a voter is not required to rank all the candidates on their ballot. But a ballot with unranked candidates can still be represented in a preference profile by interpreting all unranked candidates as tied at the bottom of the voter's ranking. Since political elections typically do not allow ties among the ranked candidates (at least in the U.S.), a ranked ballot can be modeled by what Ding et al.~\citeyearpar{Ding2025} call a \textit{linear order with bottom indifference}, i.e., a ranking without ties followed by one class of tied candidates at the bottom, representing the unranked candidates.

A \textit{voting method} is a function $\mathcal{F}$ whose domain is a set of profiles such that for any profile $\mathbf{P}$ in the domain, $\mathcal{F}(\mathbf{P})$ is a nonempty subset of the set of candidates in $\mathbf{P}$. Ideally $\mathcal{F}(\mathbf{P})$ would include only one candidate---a unique winner---but due to the possibility of ties, we allow $\mathcal{F}(\mathbf{P})$ to be a set of multiple candidates.

\subsection{Margin-based voting methods}\label{MarginBased}

Given two candidates $A,B$ in $\mathsf{C}$, we say that the \textit{margin of $A$ vs.~$B$} in a preference profile is the number of voters who rank $A$ strictly above $B$ minus the number of voters who rank $B$ strictly above $A$. If the margin is positive, then $A$ \textit{beats $B$ head-to-head}, whereas if the margin is negative, then $A$ \textit{loses to $B$ head-to-head}. A \textit{Condorcet winner} is a candidate who beats every other candidate head-to-head. A voting method satisfies the Condorcet Criterion and is called a \textit{Condorcet method} if it selects the Condorcet winner, and only the Condorcet winner, in any profile in its domain that has a Condorcet winner.

A voting method $\mathcal{F}$ is \textit{margin-based} (also called \textit{pairwise} in \citealt{Brandt2018a} and \textit{C1.5} in \citealt{DeDonder2000}) if whenever two profiles $\mathbf{P}$ and $\mathbf{P}'$ give rise to exactly the same head-to-head margins between candidates, then $\mathcal{F}$ selects the same winners in both profiles: $\mathcal{F}(\mathbf{P})=\mathcal{F}(\mathbf{P}')$. Thus, a margin-based voting method only needs to know the head-to-head margins between candidates in order to determine which candidates win. Many Condorcet voting methods are margin-based (e.g., those defined in \citealt{Copeland1951}, \citealt{Simpson1969}, \citealt{Kramer1977}, \citealt{Tideman1987}, \citealt{Schulze2011}, \citealt{HP2023,HP2023b}, \citealt{Doring2025}).\footnote{An example of a non-margin-based Condorcet method is the ``Condorcet-Hare'' or ``Condorcet-IRV'' method that elects a Condorcet winner, if one exists, and otherwise uses Instant Runoff Voting to select the winner (see \citealt{Green2016}).}

In this paper, we restrict our attention to margin-based voting methods, so it is important that we provide some motivation for doing so. Recently Ding et al.~\citeyearpar{Ding2025} identified normative principles that are together equivalent to a voting method being margin-based. A voting method $\mathcal{F}$ satisfies Preferential Equality if for any preference profile $\mathbf{P}$ and candidates $A,B$ in $\mathbf{P}$, if $\mathbf{P}^i$ is obtained from $\mathbf{P}$ by just one voter $i$ changing their ballot from ranking $A$ immediately above $B$ to ranking $B$ immediately above $A$, and similarly $\mathbf{P}^j$ is obtained from $\mathbf{P}$ by just one voter $j$ changing their ballot from ranking $A$ immediately above $B$ to ranking $B$ immediately above $A$,\footnote{Assume that neither $A$ nor $B$ is in a tie with any other candidate in $i$'s ranking or in $j$'s ranking.} then $\mathcal{F}(\mathbf{P}^i)=\mathcal{F}(\mathbf{P}^j)$. Thus, each voter's preference for $B$ over $A$ is treated equally. A voting method satisfies Tiebreaking Compensation if for any preference profile $\mathbf{P}$ and set $\mathsf{Y}$ of candidates in $\mathbf{P}$, if two voters $i$ and $j$ each rank all candidates in $\mathsf{Y}$ in a big tie (in the political context, this would mean $i$ and $j$ both leave the candidates from  $\mathsf{Y}$ unranked), and $\mathbf{P}'$ is obtained from $\mathbf{P}$ just by $i$ and $j$ breaking that tie in their rankings \textit{in exactly opposite orders} (e.g., $i$ decides to rank the candidates in $\mathsf{Y}$ in the order $Y_1,Y_2,\dots,Y_n$, while $j$ decides to rank them in the order $Y_n,Y_{n-1},\dots,Y_1$), then $\mathcal{F}(\mathbf{P})=\mathcal{F}(\mathbf{P}')$. Thus, two voters resolving a tie in exactly opposite orders cancel out. In a similar spirit, a voting method satisfies Neutral Reversal (\citealt{Saari2003}) if adding to a profile two voters whose rankings are the exact reverses of each other (and contain no ties) does not change the result. Ding et al.~\citeyearpar{Ding2025} prove that satisfying these axioms is equivalent to being margin-based for voting methods on the domain of preference profiles relevant to political elections.
\begin{theorem}[\citealt{Ding2025}] If $\mathcal{F}$ is a voting method on the domain of preference profiles of linear orders with bottom indifference,\footnote{To be precise, where $\mathcal{V}$ and $\mathcal{C}$ are infinite sets of voters and candidates, respectively, consider all preference profiles for any nonempty finite $\mathsf{V}\subseteq\mathcal{V}$ and nonempty finite $\mathsf{C}\subseteq\mathcal{C}$ in which each voter's ranking is a linear order with bottom indifference.} then the following are equivalent:
\begin{enumerate}
\item $\mathcal{F}$ is margin-based;
\item $\mathcal{F}$ satisfies Preferential Equality, Tiebreaking Compensation, and Neutral Reversal.
\end{enumerate}
\end{theorem}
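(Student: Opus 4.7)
The plan is to handle the two directions separately. The direction $(1) \Rightarrow (2)$ is the routine check that each of the three axiomatic transformations leaves every pairwise margin invariant, so no margin-based rule can respond to them. A single voter swapping adjacent candidates $A,B$ shifts the margin of $A$ vs.\ $B$ by $-2$ and leaves all other margins untouched; the identity of that voter is immaterial, which gives Preferential Equality. Two voters breaking a shared tie among $\mathsf{Y}$ in exactly reverse orders contribute $+1$ and $-1$ to every margin within $\mathsf{Y}$ while producing identical contributions across $\mathsf{Y}$, so the net effect on all margins is zero, giving Tiebreaking Compensation. Adding two voters with mutually reversed linear orders likewise contributes $0$ to every margin, giving Neutral Reversal.

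The substantive direction is $(2) \Rightarrow (1)$: assuming the three axioms, if $\mathbf{P}$ and $\mathbf{P}'$ have identical margin matrices I must show $\mathcal{F}(\mathbf{P}) = \mathcal{F}(\mathbf{P}')$. The approach I would take is to connect $\mathbf{P}$ and $\mathbf{P}'$ by a finite chain of profiles, each step of which is one of three \emph{legal moves}: a swap redirection justified by Preferential Equality, an opposite-order tie resolution justified by Tiebreaking Compensation, or an insertion/removal of a reverse pair justified by Neutral Reversal. Since each legal move preserves $\mathcal{F}$, agreement of $\mathcal{F}$-values follows once such a chain exists. To build it, I would first use Neutral Reversal to enlarge both profiles with many reverse pairs, creating enough "slack" voters; then use Tiebreaking Compensation to convert each ballot with bottom-indifference block $\mathsf{Y}$ into a pair of fully ranked ballots that resolve $\mathsf{Y}$ in opposite orders; and finally use Preferential Equality to reshape the resulting linear-order ballots, pushing the whole profile toward a canonical multiset of ballots determined entirely by the common margin matrix.

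The main obstacle is this last reshaping step. Preferential Equality as stated only allows relocating a \emph{single} adjacent swap from one voter to another, so on its own it cannot transform arbitrary linear orders into one another. I expect to have to interleave Preferential Equality with Neutral Reversal and Tiebreaking Compensation: adding a carefully chosen reverse pair introduces ballots with enough redundancy that a chain of single-swap moves can be routed through them, and the axioms ensure each intermediate profile has the same winners. In effect, the crux is a purely combinatorial lemma stating that the kernel of the margin map on the lattice of signed profiles is generated by the three elementary invariances captured by the axioms. Once this kernel lemma is in place, the implication is immediate, and I would expect this lemma to occupy the bulk of Ding et al.'s argument.
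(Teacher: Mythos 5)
There is no in-paper proof to compare against here: the paper states this result as an imported theorem of Ding et al.~(2025) and does not prove it, so your attempt can only be judged on its own terms. Your direction $(1)\Rightarrow(2)$ is correct and essentially complete: each of the three operations either preserves every pairwise margin (Tiebreaking Compensation, Neutral Reversal) or produces two profiles with identical margins ($\mathbf{P}^i$ and $\mathbf{P}^j$ under Preferential Equality), so a margin-based method cannot distinguish the relevant profiles.

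The direction $(2)\Rightarrow(1)$, however, is a strategy sketch rather than a proof, and the piece you defer---the ``kernel lemma'' asserting that any two profiles with identical margin matrices are connected by a finite chain of the three legal moves---\emph{is} the theorem; nothing is proved until it is established. Beyond that, two concrete obstacles in your plan need repair. First, a Preferential Equality step is not an $\mathcal{F}$-preserving move on a single profile: it equates $\mathcal{F}(\mathbf{P}^i)$ with $\mathcal{F}(\mathbf{P}^j)$ for two \emph{children} of a common parent $\mathbf{P}$, and each child has different margins from $\mathbf{P}$ itself. So your chain cannot consist of profiles of constant margin matrix linked by swaps; it must repeatedly exhibit common parents, and you have not shown that the needed parents exist at each stage. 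Second, your plan to eliminate bottom indifference via Tiebreaking Compensation requires, for each voter with unranked block $\mathsf{Y}$, a \emph{partner} voter with the same tied block $\mathsf{Y}$; Neutral Reversal as stated only injects pairs of tie-free linear orders, so it cannot manufacture such a partner, and there is no guarantee one exists in the profile. Until you prove that the three operations generate the kernel of the margin map on this particular domain (linear orders with bottom indifference, over variable finite voter and candidate sets), the substantive implication remains open.
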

\noindent The invocation of this result is not meant to decisively settle that only margin-based voting methods should be used. But it does help clarify the normative content of what is in effect the first axiom in our axiomatic characterization of the proposed voting method from \citealt{Holliday2025}: the voting method is margin-based.

\subsection{Weighted tournament solutions}
 
With each margin-based voting method we can associate what is called a \textit{weighted tournament solution} (see \citealt{Fischer2016} for a survey). A \textit{tournament} $\mathrm{T}=(\mathsf{C},\to)$ is a pair of a nonempty set $\mathsf{C}$, which we take to be the set of candidates, and a binary relation $\to$ on $\mathsf{C}$, which we take to be the relation of \textit{head-to-head defeat}, satisfying the following properties:
\begin{enumerate}
\item asymmetry: for all $A,B$ in $\mathsf{C}$, it is not the case that $A\to B$ and $B\to A$.
\item connectedness: for all $A,B$ in $\mathsf{C}$, if $A\neq B$, then either $A\to B$ or $B\to A$.
\end{enumerate}
Connectedness rules out head-to-head ties. Such ties are possible but extremely unlikely in large political elections. In this paper, we will ignore the possibility of head-to-head ties. Thus, our axiomatization results will force a voting method to agree with the method defined in Section~\ref{Intro} only in cases without such ties; but we prefer not to get bogged down with the subtleties of tie-breaking in this paper.

If $\mathbf{P}$ is a preference profile with no zero margins between distinct candidates, then $\mathbf{P}$ gives rise to a tournament $(\mathsf{C},\to)$, which we may call $\mathbf{P}$'s \textit{associated tournament}, where $\mathsf{C}$ is the set of candidates from $\mathbf{P}$ and $A\to B$ if the margin of $A$ vs.~$B$ in $\mathbf{P}$ is positive. Conversely, McGarvey \citeyearpar{McGarvey1953} proved that for every tournament $\mathrm{T}$, there is a preference profile (of linear orders) whose associated tournament is $\mathrm{T}$.

However, it is not enough for us to know who defeats whom head-to-head; we also need to know \textit{by how much}. A \textit{weighted tournament} in the sense of \citealt{DeDonder2000} (there called a \textit{$0$-weighted tournament}) is a pair $T=(\mathsf{C},m)$ where $\mathsf{C}$ is a nonempty set of candidates, as above, but now $m$ is a function that assigns to each pair $A,B$ of candidates from $\mathsf{C}$ an integer $m(A,B)$ such that $m(A,B)+m(B,A)=0$, or equivalently, $m(A,B)=-m(B,A)$. Given any preference profile $\mathbf{P}$, we obtain a weighted tournament $(\mathsf{C},m)$, which we may call $\mathbf{P}$'s \textit{associated weighted tournament}, where $\mathsf{C}$ is the set of candidates from $\mathbf{P}$ and for any $A,B$ in $\mathsf{C}$,  $m(A,B)$ is the margin of $A$ vs.~$B$ in $\mathbf{P}$. Conversely, Debord \citeyearpar{Debord1987} proved that for every weighted tournament $T$, there is a preference profile whose associated weighted tournament is $T$.\footnote{If we want $\mathbf{P}$ to be a profile of linear orders, then $T$ must be such that all numbers in $\{m(A,B)\mid A,B\in\mathsf{C}\mbox{ and } A\neq B\}$ have the same parity.}

By a \textit{uniquely-weighted} tournament, we mean a weighted tournament $(\mathsf{C},m)$ in which no two head-to-head matches between distinct candidates are decided by exactly the same margin, i.e., if $A\neq B$, $C\neq D$, and $(A,B)\neq (C,D)$, then $m(A,B)\neq m(C,D)$. It follows that if $A\neq B$, then $m(A,B)\neq 0$ (since otherwise $m(A,B)=m(B,A)$), so a uniquely-weighted tournament can be associated with a tournament $(\mathsf{C},\to)$ where $A\to B$ if $m(A,B)>0$. Almost every large-scale political election results in a uniquely-weighted tournament. Since, as noted above, we do not wish to get bogged down in the subtleties of tie-breaking in this paper, our main results will concern uniquely-weighted tournaments only.

A \textit{weighted tournament solution} is a function whose domain is some set of weighted tournaments such that for any $T=(\mathsf{C},m)$ in the domain, $F(T)$ is a nonempty subset of $\mathsf{C}$. Given a margin-based voting method $\mathcal{F}$, we obtain a weighted tournament solution $F$ whose domain consists of the associated weighted tournaments of the profiles in the domain of $\mathcal{F}$: for any $T$ in the domain, we set $F(T)=\mathcal{F}(\mathbf{P})$, where $\mathbf{P}$ is one of the profiles whose associated weighted tournament is $T$; since $\mathcal{F}$ is margin-based, any choice of such a profile produces the same result, so $F$ is well defined.

\subsection{Four weighted tournament solutions for Final Four elections}\label{FourSolutions}

From now until Section~\ref{Five}, we will focus on weighted tournament solutions defined on the set of uniquely-weighted tournaments with up to four candidates, which can be used for Final Four elections. As examples of such solutions, we will discuss the following:

\begin{itemize}
\item Copeland (\citealt{Copeland1951}): select the candidates with the most head-to-head wins.\footnote{\label{TiesCopeland}When head-to-head ties are allowed, Copeland is usually defined so that we score each candidate by their number of wins \textit{minus their number of losses} and then select the candidates with the highest score. This selects the same candidates as the following scheme: give each candidate 1 point for a win, $1/2$ a point for a tie, and $0$ points for a loss, and then select the candidates with the most points. However, there are other ways of handling ties, such as the Llull method that assigns a candidate $1$ point for a win or tie and $0$ points for a loss (see \citealt{Faliszewski2009}).} Formally, \[\mathrm{Copeland}(\mathsf{C},m)=\underset{X\in\mathsf{C}}{\mathrm{argmax}} \;| \{Y\in \mathsf{C}\mid m(X,Y)>0 \} |,\]
where $|S|$ is the cardinality of the set $S$.
\item Minimax (\citealt{Simpson1969}, \citealt{Kramer1977}): select the candidate whose worst head-to-head loss is smallest. Formally,
\[\mathrm{Minimax}(\mathsf{C},m)= \underset{X\in\mathsf{C}}{\mathrm{argmin}}\; \mathrm{max}(\{m(Y,X)\mid Y\in\mathsf{C}\mbox{ and }m(Y,X)>0\}),\]
where we set $\mathrm{max}(\varnothing)=0$.
\item Proposed method (\citealt{Holliday2025}): select the candidate with the most head-to-head wins; if multiple candidates tie for the most wins, then select the one who has the smallest head-to-head loss. Formally,
\[\mathrm{Proposed}(\mathsf{C},m)=  \underset{X\in \mathrm{Copeland}(\mathsf{C},m)}{\mathrm{argmin}}\; \iota (\{ m(Y,X)\mid Y\in\mathsf{C}\mbox{ and } m(Y,X) > 0 \}),
 \]
 where $\iota(\varnothing)=0$ and $\iota(\{n\})=n$, taking advantage of the fact noted in Section~\ref{Intro} (Footnote~\ref{AtMostOne}) that in a tournament with four candidates, there is a candidate with at most one loss, so each Copeland winner has at most one loss, so the set to which $\iota$ is applied has at most one element.
\item Variant method (\citealt[Notes 11 and 22]{Holliday2025}): select the candidate with the most head-to-head wins; if multiple candidates tie for the most wins, then among the candidates in that group, select the candidate who beats each of the others in that group head-to-head; if there is no such candidate, then select the candidate in that group with the smallest head-to-head loss to another candidate in that group. Formally,
\[\mathrm{Variant}(\mathsf{C},m)=  \underset{X\in \mathrm{Copeland}(\mathsf{C},m)}{\mathrm{argmin}}\; \iota (\{ m(Y,X)\mid Y\in \mathrm{Copeland}(\mathsf{C},m)\mbox{ and } m(Y,X) > 0 \}),
 \]
 with $\iota$ defined as above.
\end{itemize}
\noindent In the next section, we will discuss an example in which the proposed method and variant method give different results. Recall that when there are five or more candidates, it is no longer guaranteed that each Copeland winner has at most one loss (see Figure~\ref{Penta}). Then we face the question of whether we should choose the Copeland winner whose worst loss is smallest (replacing $\iota$ with $\mathrm{max}$), as suggested in \citealt[\S~2.3]{Holliday2025}, or the Copeland winner whose smallest loss is smallest (replacing $\iota$ with $\mathrm{min}$), as we will suggest in Section~\ref{Five}.

\begin{figure}[h!]
\begin{center}
\begin{tikzpicture} 

\node[circle,draw, minimum width=0.3in] at (0,0) (W) {$W$}; 
\node[circle,draw,minimum width=0.3in] at (4,0) (E) {$E$}; 
\node[circle,draw,minimum width=0.3in, fill=medgreen!75] at (2,2) (N) {$N$}; 
\node[circle,draw,minimum width=0.3in] at (2,-2) (S) {$S$}; 

\path[->,draw,thick] (N) to node { } (W);
\path[->,draw,thick] (N) to node   { } (E);
\path[->,draw,thick] (N) to node    { } (S);
\path[->,draw,thick] (W) to node {} (S);
\path[->,draw,thick] (W) to node {} (E);
\path[->,draw,thick] (E) to node {} (S);

\node at (2,-2.75)  {Condorcet winner}; 
\node at (2,-3.25)  {linear order};

\end{tikzpicture}\qquad\qquad \begin{tikzpicture} 

\node[circle,draw, minimum width=0.3in] at (0,0) (W) {$W$}; 
\node[circle,draw,minimum width=0.3in] at (4,0) (E) {$E$}; 
\node[circle,draw,minimum width=0.3in, fill=medgreen!75] at (2,2) (N) {$N$}; 
\node[circle,draw,minimum width=0.3in] at (2,-2) (S) {$S$}; 

\path[->,draw,thick] (N) to node { } (W);
\path[->,draw,thick] (N) to node   { } (E);
\path[->,draw,thick] (N) to node    { } (S);
\path[->,draw,very thick] (S) to node {} (W);
\path[->,draw,very thick] (W) to node {} (E);
\path[->,draw,very thick] (E) to node {} (S);

\node at (2,-2.75) {Condorcet winner}; 
\node at (2,-3.25) {bottom cycle};

\end{tikzpicture}\vspace{.2in}

\begin{tikzpicture} 

\node[circle,draw, minimum width=0.3in, fill=gray!75] at (0,0) (W) {$W$}; 
\node[circle,draw,minimum width=0.3in, fill=gray!75] at (4,0) (E) {$E$}; 
\node[circle,draw,minimum width=0.3in, fill=medgreen!75] at (2,2) (N) {$N$}; 
\node[circle,draw,minimum width=0.3in] at (2,-2) (S) {$S$}; 

\path[->,draw,very thick] (W) to node[fill=white] {\footnotesize small} (N);
\path[->,draw,very thick] (N) to node[fill=white]   {\footnotesize medium} (E);
\path[->,draw,very thick] (E) to[pos=.75]  node[fill=white]   {\footnotesize large} (W);
\path[->,draw,thick] (W) to node {} (S);
\path[->,draw,thick] (N) to node {} (S);
\path[->,draw,thick] (E) to node {} (S);

\node at (2,-2.75) {ascending top cycle}; 
\node at (2,-3.25)  {Condorcet loser};

\end{tikzpicture}\qquad\qquad \begin{tikzpicture} 

\node[circle,draw, minimum width=0.3in, fill=gray!75] at (0,0) (W) {$W$}; 
\node[circle,draw,minimum width=0.3in, fill=gray!75] at (4,0) (E) {$E$}; 
\node[circle,draw,minimum width=0.3in, fill=medgreen!75] at (2,2) (N) {$N$}; 
\node[circle,draw,minimum width=0.3in] at (2,-2) (S) {$S$}; 

\path[->,draw,very thick] (W) to node[fill=white] {\footnotesize small} (N);
\path[->,draw,very thick] (N) to node[fill=white]   {\footnotesize large} (E);
\path[->,draw,very thick] (E) to[pos=.735]  node[fill=white]   {\footnotesize medium} (W);
\path[->,draw,thick] (W) to node {} (S);
\path[->,draw,thick] (N) to node {} (S);
\path[->,draw,thick] (E) to node {} (S);

\node at (2,-2.75) {descending top cycle}; 
\node at (2,-3.25)  {Condorcet loser};

\end{tikzpicture} \vspace{.2in}

\begin{tikzpicture} 

\node[circle,draw, minimum width=0.3in] at (0,0) (W) {$W$}; 
\node[circle,draw,minimum width=0.3in, fill=gray!75] at (4,0) (E) {$E$}; 
\node[circle,draw,minimum width=0.3in, fill=medgreen!75] at (2,2) (N) {$N$}; 
\node[circle,draw,minimum width=0.3in] at (2,-2) (S) {$S$}; 

\path[->,draw,very thick] (W) to node[fill=white] {\footnotesize smaller} (N);
\path[->,draw,very thick] (N) to node[fill=white]   {\footnotesize larger} (E);
\path[->,draw,thick] (E) to[pos=.75]  node   {} (W);
\path[->,draw,very thick] (S) to node {} (W);
\path[->,draw,thick] (N) to node {} (S);
\path[->,draw,very thick] (E) to node {} (S);

\node at (2,-3) {SL four cycle};

\end{tikzpicture}\qquad\qquad \begin{tikzpicture} 

\node[circle,draw, minimum width=0.3in] at (0,0) (W) {$W$}; 
\node[circle,draw,minimum width=0.3in, fill=medgreen!75] at (4,0) (E) {$E$}; 
\node[circle,draw,minimum width=0.3in, fill=gray!75] at (2,2) (N) {$N$}; 
\node[circle,draw,minimum width=0.3in] at (2,-2) (S) {$S$}; 

\path[->,draw,very thick] (W) to node[fill=white] {\footnotesize larger} (N);
\path[->,draw,very thick] (N) to node[fill=white]   {\footnotesize smaller} (E);
\path[->,draw,thick] (E) to[pos=.75]  node   {} (W);
\path[->,draw,very thick] (S) to node {} (W);
\path[->,draw,thick] (N) to node {} (S);
\path[->,draw,very thick] (E) to node {} (S);

\node at (2,-3) {LS four cycle};

\end{tikzpicture}

\end{center}
\caption{Classes of weighted tournaments for four candidates from \citealt{Holliday2025}. In each case, the winning candidate according to the proposed method is shown in green.}\label{ElectionTypes}
\end{figure}
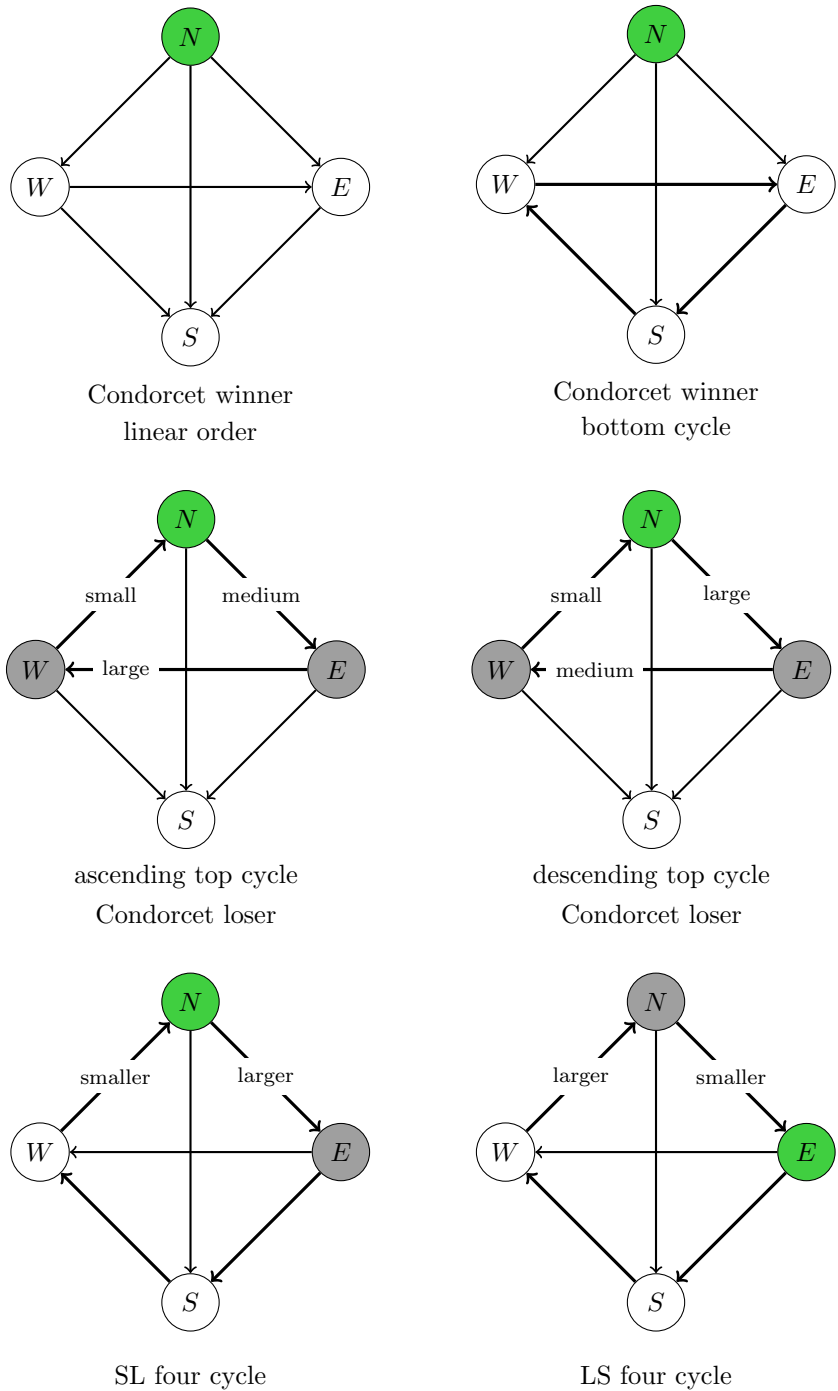

As shown in \citealt{Holliday2025}, every uniquely-weighted tournament for four candidates falls into one of the six types in Figure \ref{ElectionTypes}, where we give the ordinal comparisons of margin sizes that are relevant to the choice of winner by the proposed method. The winner according to the proposed method is shown in green; if there are additional Copeland winners, they are shown in gray.

\section{Axioms}\label{Axioms}

In this section, we introduce the axioms used in our characterization of the proposed method for Final Four elections, dedicating one subsection to each axiom. Table \ref{AxTable} shows the patterns of satisfaction and violation of these axioms, to be explained below, by the four voting methods from Section~\ref{Prelim}.

\begin{table}[h]
\begin{center}
\begin{tabular}{l|c|c|c|c}
Axiom & Copeland & Minimax & Proposed method & Variant method \\
\hline
Proximity to Condorcet  & $\checkmark$ & $\checkmark$ & $\checkmark$ & --  \\
Independence of Irrelevant Defeats & $\checkmark$ & $\checkmark$ & $\checkmark$ & --  \\
Win Monotonicity & $\checkmark$ & $\checkmark$ & $\checkmark$ & $\checkmark$    \\
Win Dominance & $\checkmark$ & -- & $\checkmark$& $\checkmark$ \\
Rare Ties & -- & $\checkmark$ & $\checkmark$& $\checkmark$  \\

\end{tabular}
\end{center}
\caption{Satisfaction ($\checkmark$) and violation (--) of axioms by voting methods.}\label{AxTable}
\end{table}

\subsection{Proximity to Condorcet}

Suppose one takes the Condorcet winner to be the gold standard for a clear winner in an election (we will derive the Condorcet Criterion from other axioms in Proposition~\ref{CondorcetDerived} below). Then as suggested in Section~\ref{Intro}, the question of what to do when there is no Condorcet winner in an election seems to have a very natural answer: elect the candidate who is \textit{closest} to being a Condorcet winner. The difficulty is that there are multiple plausible definitions of closeness, which lead to different results, as studied in the literature on \textit{distance rationalizations} of voting rules (see \citealt{Elkind2016} for an overview). For example, perhaps the distance of a candidate $C$ from being the Condorcet winner is to be measured in terms of the \textit{number of new voters} one would need to add to the election in order to make $C$ the Condorcet winner. Then the candidate closest to being the Condorcet winner is the Minimax winner, as observed by Young~\citeyearpar{Young1977}.  Or perhaps the distance of $C$ from being the Condorcet winner is to be measured in terms of the number of head-to-head results that would have to flip (from $X$ beating $Y$ to $Y$ beating $X$) in order for $C$ to be the Condorcet winner.  Then the candidate closest to being the Condorcet winner is the Copeland winner, as observed by Klamler \citeyearpar{Klamler2005}. Or perhaps distance is to be measured in terms of some combination of these factors or other factors.\footnote{As an example of another factor, Young \citeyearpar{Young1977} considers the number of voters who would have to be \textit{removed} from a preference profile in order to make $C$ the Condorcet winner. Unlike the definitions of closeness in the main text, this definition requires more information about a preference profile than the head-to-head margins, since it depends on how many existing voters have each type of ranking.} The problem is that what counts as the closest counterfactual situation in which $C$ would have been the Condorcet winner is not obvious. This is related to the more general problem, long investigated by philosophers, of determining what counts as the closest counterfactual situation in which something different happened than what actually happened (see, e.g.,~\citealt{Lewis1979}).

Our approach is to assume a relatively uncontroversial sufficient condition for one candidate to count as closer than another to being the Condorcet winner, at least in the context of weighted tournaments:
\begin{itemize}
\item \textbf{Proximity to Condorcet}: for any weighted tournament $T=(\mathsf{C},m)$, candidates $A,B$ in $\mathsf{C}$, if $A$ can be made a Condorcet winner by improving one of $A$'s margins against another candidate by some non-negative integer $n$,\footnote{That is, for some candidate $X\neq A$ in $\mathsf{C}$, $A$ is the Condorcet winner in the weighted tournament $T'=(\mathsf{C},m')$ where $m'(A,X)=m(A,X)+n$ and hence $m'(X,A)=m(X,A)-n$, while $m'$ is otherwise the same as $m$.} while $B$ cannot be made a Condorcet winner by improving \textit{every one} of $B$'s margins by the same $n$,\footnote{That is, $B$ is not the Condorcet winner in the weighted tournament $T^\#=(\mathsf{C},m^\#)$ where for \textit{every} candidate $V\neq B$ in $\mathsf{C}$, $m^\#(B,V)=m(B,V)+n$ and hence $m^\#(V,B)=m(V,B)-n$, while $m^\#$ is otherwise the same as $m$.} then $F(T)\neq \{B\}$. 
\end{itemize}
This axiom is consistent with both the Minimax and Copeland notions of closeness discussed above: the assumptions on $A$ and $B$ imply that fewer voters must be added to  the election to make $A$ a Condorcet winner than to make $B$ a Condorcet winner, and at most one head-to-head result must be flipped in order to make $A$ the Condorcet winner, while at least one must be flipped in order to make $B$ the Condorcet winner. Hence $A$ is at least as close to being the Condorcet winner as $B$ is according to both notions of closeness, and $A$ is strictly closer according to the notion based on voter addition; so $B$ should not be the unique winner.

Thus, both Minimax and Copeland satisfy Proximity to Condorcet in our sense. The proposed method also satisfies the axiom, since if $A$ has at most one loss, and $B$ has a larger loss than $A$ does, then the proposed method will not elect $B$. However, we will show in the next subsection that the variant method violates Proximity to Condorcet, at the same time that we show it violates another axiom.

Proximity to Condorcet sidesteps difficult cases like the following: in $T$, $A$ only has one loss, by a margin of $6$, whereas $B$ has two losses,  by margins of $4$ and $2$. In this case, in which Copeland favors $A$ whereas Minimax favors $B$, the axiom simply does not apply. There is no $n$ such that $A$ can be made a Condorcet winner in a $T_A$ obtained from $T$ by improving one of $A$'s margins by $n$ while $B$ cannot be made a Condorcet winner in a $T_B$ obtained from $T$ by improving \textit{every one of} $B$'s margins by $n$. We would have to improve one of $A$'s margins by $n\geq 7$ in order to make $A$ the Condorcet winner in $T_A$, but if we were to improve every one of $B$'s margins by this $n$, that would make $B$ the Condorcet winner in $T_B$. On the other hand, there is no $n$ such that $B$ can be made a Condorcet winner by improving one of $B$'s margins by $n$, since $B$ has two losses. 

Thus, there are many cases in which Proximity to Condorcet does not apply.  In these cases, our other axioms, introduced below, will jointly determine who wins.

\subsection{Independence of Irrelevant Defeats}

Arrow's \citeyearpar{Arrow1963} famous axiom of the Independence of Irrelevant Alternatives (IIA) states that to determine the \textit{social ranking} of two alternatives, $A$ and $B$, all that matters is how voters rank $A$ vs.~$B$; that is, if we consider two preference profiles such that for each voter $i$, $i$ ranks $A$ vs.~$B$ in the same way in the two profiles, then the social ranking of $A$ vs.~$B$ must be the same for both profiles. The moral of Arrow's Impossibility Theorem and many later generalizations is plausibly that IIA is too strong (see \citealt{HP2020,HP2021}, \citealt{HK2025}, and references therein); in any case, none of the ranked voting methods used in practice satisfy it. By contrast, in his axiomatization of the Copeland ranking rule, Rubinstein \citeyearpar{Rubinstein1980} identifies a weakening of IIA that is satisfied not only by the Copeland ranking but also by other ranking rules. According to this weakened version, when determining the social ranking of two alternatives, $A$ and $B$, it does not matter how voters rank some $C$ vs.~$D$, where $C$ and $D$ are distinct from both $A$ and $B$; but it may matter how $A$ performs against $C$/$D$ and $B$ performs against $C$/$D$. This is quite intuitive: how two other candidates perform against each other is irrelevant to the social ranking of $A$ vs.~$B$, but how $A$ and $B$ perform against other candidates is not irrelevant to the social ranking of $A$ vs.~$B$. What are \textit{irrelevant} to $A$ vs.~$B$ are not \textit{alternatives} like $C$ but rather certain \textit{defeats} like $C$'s defeat of $D$, whereas $A$'s defeat of $C$ may well be relevant. Thus, Rubinstein's principle may be called the Independence of Irrelevant Defeats.

In this paper, we are interested in picking a winning candidate instead of ranking the candidates, but Rubinstein's principle can be applied in our setting, as follows:
\begin{itemize}
\item \textbf{Independence of Irrelevant Defeats}: for any weighted tournament $T$ and distinct candidates $A,B$, if $F(T)=\{A\}$ and $T'$ is obtained from $T$ by changing a margin involving neither $A$ nor $B$, while keeping all other margins the same, then $F(T')\neq \{B\}$. 
\end{itemize}
The idea is that if $F(T)=\{A\}$, this tells us that $A$ is socially ranked above $B$. Then by Rubinstein's principle, if $T'$ is obtained by changing a margin that is \textit{irrelevant} to $A$ vs.~$B$, this should not cause $B$ to be socially ranked above $A$, so we should not get that $F(T')=\{B\}$. 

Clearly Copeland, Minimax, and the proposed method all satisfy Independence of Irrelevant Defeats, since changing the margin of $C$ vs.~$D$ does not affect $A$'s or $B$'s number of wins or sizes of losses. By contrast, the variant method violates Independence of Irrelevant Defeats. For example, on the left of Figure \ref{IIDViolation}, the candidates with the most wins are $N$ and $E$, each with two wins, but $N$ beats $E$ head-to-head, so the variant method selects $N$, despite $N$ having a larger loss than $E$; by contrast, on the right, the candidates with the most wins are $N$, $W$, and $E$, and none of these candidates beats both of the others head-to-head, so the variant method selects the one with the smallest loss, namely $E$. Since changing the irrelevant head-to-head result between $W$ and $S$ changes the selected candidate from $N$ to $E$, the variant method violates Independence of Irrelevant Defeats. By contrast, the proposed method selects $E$ in both cases, since among the candidates with two wins, $E$ has the smallest loss and is therefore closest to being the Condorcet winner. As promised in the previous subsection, this is also a violation of Proximity to Condorcet by the variant method, since on the left of Figure~\ref{IIDViolation}, the variant method selected $N$ even though $E$ is closer to being the Condorcet winner. The same example will be used in Section~\ref{Five} to show that the variant method violates Immunity to Spoilers.

\begin{figure}[h!]
\begin{center}
 \begin{minipage}{2.5in}
 \begin{center}\begin{tikzpicture} 

\node[circle,draw, minimum width=0.3in] at (0,0) (W) {$W$}; 
\node[circle,draw,minimum width=0.3in, fill=gray!75] at (4,0) (E) {$E$}; 
\node[circle,draw,minimum width=0.3in, fill=gray!75] at (2,2) (N) {$N$}; 
\node[circle,draw,minimum width=0.3in] at (2,-2) (S) {$S$}; 

\path[->,draw, thick] (W) to node[fill=white] {\footnotesize large} (N);
\path[->,draw, thick] (N) to node[fill=white]   {\footnotesize small} (E);
\path[->,draw, thick] (E) to[pos=.74]  node[fill=white]   {\footnotesize medium}  (W);
\path[->,draw, thick,red] (S) to node {} (W);
\path[->,draw, thick] (N) to node {} (S);
\path[->,draw, thick] (E) to node {} (S);

\node at (2,-3) {LS four cycle};

\end{tikzpicture}\end{center}\end{minipage}\begin{minipage}{2.5in}\begin{center}\begin{tikzpicture} 

\node[circle,draw, minimum width=0.3in, fill=gray!75] at (0,0) (W) {$W$}; 
\node[circle,draw,minimum width=0.3in, fill=gray!75] at (4,0) (E) {$E$}; 
\node[circle,draw,minimum width=0.3in, fill=gray!75] at (2,2) (N) {$N$}; 
\node[circle,draw,minimum width=0.3in] at (2,-2) (S) {$S$}; 

\path[->,draw, thick] (W) to node[fill=white] {\footnotesize large} (N);
\path[->,draw, thick] (N) to node[fill=white]   {\footnotesize small} (E);
\path[->,draw, thick] (E) to[pos=.75]  node[fill=white]   {\footnotesize medium}  (W);
\path[->,draw, thick, red] (W) to node {} (S);
\path[->,draw,thick] (N) to node {} (S);
\path[->,draw, thick] (E) to node {} (S);

\node at (2,-3) {top cycle};

\end{tikzpicture}
\end{center}
\end{minipage}
\end{center}
\caption{A violation of Independence of Irrelevant Defeats by the variant method: on the left, the variant method selects $N$, whereas on the right, the variant method selects $E$, while the only change is the irrelevant head-to-head result between $W$ and $S$. The proposed method selects $E$ in both cases.}\label{IIDViolation}
\end{figure}
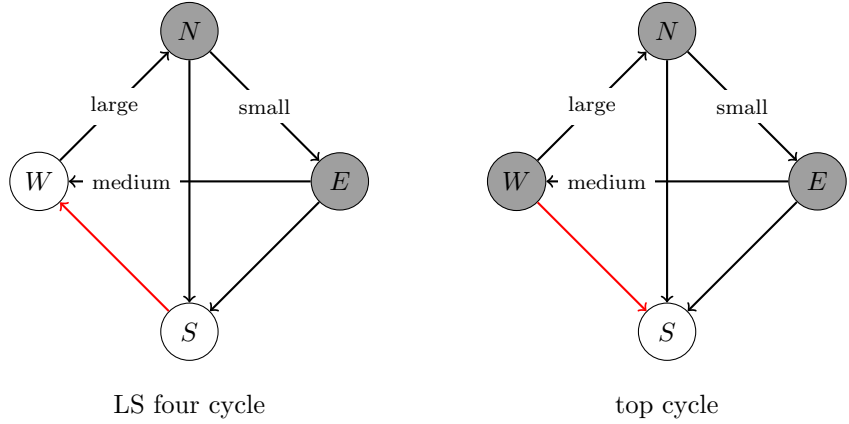

\subsection{Win Monotonicity}

The next axiom says that if we imagine changing an election by strengthening a win of the elected candidate $A$ by as much as we strengthen a win of an unelected candidate $B$ (against some candidate distinct from $A$), then $A$ should still be elected in the modified election: 

\begin{itemize}
\item \textbf{Win Monotonicity}: for any weighted tournament $T$, candidates $A,B$, and non-negative integer $n$, if $F(T)=\{A\}$, and $T'$ is obtained from $T$ by improving one of $A$'s margins of victory by $n$ and one of $B$'s margins of victory over some $X\neq A$ by $n$, then $F(T')=\{A\}$.\footnote{By `improving one of $A$'s margins of victory by $n$', we mean that for some $Y$ such that $m(A,Y)>0$ in $T=(\mathsf{C},m)$, we change $m$ to $m'$ such that $m'(A,Y)=m(A,Y)+n$ and hence $m'(Y,A)=m(Y,A)-n$, while $m'$ is otherwise the same as $m$, and similarly for improving $B$'s positive margin vs.~some $X\neq A$.}
\end{itemize}
Copeland, Minimax, the proposed method, and the variant method all satisfy Win Monotonicity, since the change from $T$ to $T'$ does not change any candidate's number of wins, and it neither changes the size of $A$'s losses nor decreases the size of $B$'s losses. Note that it would be too strong to replace `margins of victory' in the definition with simply `margins' (i.e., `margins of victory or loss'), since strengthening one of $A$'s existing majority victories may be less important than flipping one of $B$'s majority \textit{losses} to a majority \textit{victory}, even if the same change in margin is involved in both cases. For example, flipping one of $B$'s losses to a victory with a change in margin of $n$ may make $B$ the Condorcet winner, in which case $F(T')\neq\{A\}$ by Proximity to Condorcet. Thus, we restrict attention to head-to-head wins in the statement of Win Monotonicity.

\subsection{Win Dominance}\label{WinDom}

All the axioms introduced so far are satisfied by both Copeland and Minimax. It is our last two axioms that will distinguish between these methods. The first one descends from the \textit{covering relation} of Fishburn~\citeyearpar{Fishburn1977} and Miller \citeyearpar{Miller1980}. In a tournament, $A$ \textit{covers} $B$ if $A$ not only defeats $B$ but also defeats every candidate whom $B$ defeats.\footnote{In a \textit{weak} tournament allowing head-to-head ties between candidates, there are several non-equivalent versions of covering (see \citealt{Duggan2013}), but in the tournaments that we are interested in, all the versions become equivalent.} Fishburn and Miller propose choosing candidates from the \textit{uncovered set}, i.e., the set of candidates who are not covered by any other candidate. We will generalize this idea to weighted tournaments\footnote{\label{Supercover}A different generalization of the covering relation to weighted tournaments appears in \citealt{Dutta1999} (Definition 3.2) and \citealt{Fernandez2018}: say that $A$ $m$-covers $B$ if $m(A,B)>0$ and for all $X\in \mathsf{C}\setminus\{A,B\}$, $m(A,X)\geq m(B,X)$.} as follows: $A$ \textit{dominates $B$ in wins} if $A$ not only defeats $B$ but also defeats every candidate whom $B$ defeats \textit{by at least as much as $B$ does}.\footnote{That is, $m(A,B)>0$ and for all $X\in \mathsf{C}\setminus\{A,B\}$, if $m(B,X)>0$, then $m(A,X)\geq m(B,X)$. Note that this is a weaker condition than that in Footnote~\ref{Supercover}.} The next axiom simply says that a candidate who is dominated in wins cannot be the winner:
\begin{itemize}
\item \textbf{Win Dominance}: for any weighted tournament $T$ and candidates $A,B$, if $A$ dominates $B$ in wins, then $F(T)\neq \{B\}$.
\end{itemize}
This axiom is weaker than requiring that winners belong to the uncovered set, since $B$ may be covered by $A$ and yet not dominated in wins when we look at the margins of victory. Still, this is plausibly the least normatively compelling axiom in our list, since it rules out electing candidates who are dominated in wins even if those candidates do well from the perspective of having small losses.\footnote{Indeed, the axiom is stronger than requiring that winners not be $m$-covered as in Footnote \ref{Supercover}, since if $A$ $m$-covers $B$, then $A$ dominates $B$ in wins, but the converse implication does not generally hold.}

Copeland, the proposed method, and the variant method all clearly satisfy Win Dominance, since if $A$ dominates $B$ in wins, then $A$ has at least one more head-to-head win than $B$ does, so $B$ is not among the candidates with the most wins. By contrast, Minimax clearly violates it, since Minimax can select a Condorcet loser, who is dominated in wins by every other candidate.  As suggested above, this is hardly a decisive argument against Minimax, since perhaps it is more important to prioritize small losses than wins of any size. But our aim here is not to decisively argue against alternative methods like Minimax\footnote{As suggested in Section~\ref{Five}, perhaps the most important practical advantage of the proposed method over Minimax is that voters may find the proposed method easier to understand.} but rather to identify a package of principles that lead uniquely to the proposed voting method.

\subsection{Rare Ties}

Our final axiom highlights the problem with Copeland as a weighted tournament solution: it leads to too many ties. In the bottom four types of weighted tournaments in Figure~\ref{ElectionTypes}, Copeland fails to select a unique winner. Of course, if the margins of defeat are tied, e.g., if we have a top cycle in which all candidates in the top cycle beat each other by the same margin, then we may be forced to declare a tie among candidates. But our final axiom says that if there are no ties between margins of defeat, so the tournament is \textit{uniquely weighted}, then there should be no ties between candidates either:
\begin{itemize}
\item \textbf{Rare Ties}: for any uniquely-weighted tournament $T$,  $F(T)$ contains only one candidate.\footnote{The analogous axiom for preference profiles instead of weighted tournaments is called `quasi-resoluteness' in \citealt{HollidayEtAl2024}.}
\end{itemize}
The motivation for this axiom is pragmatic: for large-scale political elections, we should try to make ties extremely rare, and if all head-to-head margins are distinct, then we are not forced to declare a tie (as the many voting methods satisfying Rare Ties show), so we should not declare a tie.

Minimax, the proposed method, and the variant method all satisfy this axiom, since if all margins are distinct, then there cannot be a tie when comparing the sizes of two candidates' losses. Other weighted tournament solutions satisfying the axiom include Ranked Pairs (\citealt{Tideman1987}), Beat Path (\citealt{Schulze2011}), River (\citealt{Doring2025}), and Stable Voting (\citealt{HP2023b}).

\begin{remark}\label{BordaNote}Not all standard weighted tournament solutions satisfy Rare Ties, but they can be turned into ones that do. For example, consider the Copeland-Global-Borda solution\footnote{The reason for the `Global' in the name of this method will be explained in Remark~\ref{LocalGlobal}.} (named in part after \citealt{Borda1781}) on weighted tournaments (see \citealt[p.~77]{Dummett1997}, \citealt{MaskinDasgupta2004}, and \citealt{Foley2021} for Copeland-Global-Borda defined on preference profiles), which selects from among the Copeland winners the ones with the greatest \textit{symmetric Borda score}, where the symmetric Borda score of a candidate is the sum of their margins against all candidates (\citealt[p.~28]{Zwicker2016}). Formally,
\[\mathrm{CGB}(\mathsf{C},m)= \underset{X\in\mathrm{Copeland}(\mathsf{C},m)}{\mathrm{argmax}} \;\underset{Y\in\mathsf{C}}\sum m(X,Y).\]
The reason this method violates Rare Ties is that even in a uniquely-weighted tournament, Copeland winners may tie in the sum of their margins. For example, one Copeland winner may have margins of $13$, $3$, and $-1$, which sum to $15$, while another has margins of $11$, $7$, and $-3$, which also sum to $15$. Of course, such tied sums will be extraordinarily rare in large political elections, so satisfying Rare Ties as stated above is sufficient but not necessary for making ties rare in practice.

For any weighted tournament solution $F$  that does not satisfy Rare Ties in our sense, one can modify it to a solution that satisfies Rare Ties by adding a final Minimax tie-breaking step:
\[F^+(\mathsf{C},m)= \underset{X\in F(\mathsf{C},m)}{\mathrm{argmin}} \,\mathrm{max}(\{m(Y,X)\mid Y\in\mathsf{C}\mbox{ and }m(Y,X)>0\}).\]
Both CGB and CGB$^+$ satisfy our axioms of Independence of Irrelevant Defeats, Win Monotonicity, and Win Dominance, as we will argue in the proof of Proposition \ref{IndProp}. However, they both violate Proximity to Condorcet, as we will also show in the proof of Proposition \ref{IndProp}. These Borda-based methods also violate Immunity to Spoilers, as we will show in Section \ref{Five}. From a practical perspective, they are also less simple conceptually than the proposed method, insofar as one has to think about checking, for each Copeland winner, the sum of all their margins, both positive and negative, whereas with the proposed method, one simply thinks about checking their smallest loss.\end{remark}

\section{Proof of characterization}\label{Proof}

Let us now restate our axioms all in one place, leaving the quantification over all weighted tournaments~$T$, distinct candidates $A,B$, and non-negative integer $n$ implicit:
\begin{itemize}
\item \textbf{Proximity to Condorcet}: if $A$ can be made a Condorcet winner by increasing one of $A$'s margins by $n$, while $B$ cannot be made a Condorcet winner by increasing \textit{every one} of $B$'s margins by $n$, then $F(T)\neq \{B\}$.
\item \textbf{Independence of Irrelevant Defeats}: if $F(T)=\{A\}$ and $T'$ is obtained from $T$ by changing a margin involving neither $A$ nor $B$, then $F(T')\neq \{B\}$.
\item \textbf{Win Monotonicity}: if $F(T)=\{A\}$, and $T'$ is obtained from $T$ by improving one of $A$'s margins of victory by $n$ and one of $B$'s margins of victory over some $X\neq A$ by $n$, then $F(T')=\{A\}$.
\item \textbf{Win Dominance}: if $A$ dominates $B$ in wins, then $F(T)\neq \{B\}$.
\item \textbf{Rare Ties}: if $T$ is uniquely-weighted, then $F(T)$ contains only one candidate.
\end{itemize}

We are now in a position to prove our first main result.

\begin{theorem}\label{MainFour} Let $F$ be a weighted tournament solution for up to four candidates satisfying the axioms above. Then in any uniquely-weighted tournament, $F$ selects the same winner as the proposed method.
\end{theorem}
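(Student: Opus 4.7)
I would proceed by case analysis on the six classes of uniquely-weighted tournaments for four candidates shown in Figure~\ref{ElectionTypes}. As a preliminary lemma, I would derive the Condorcet Criterion from Proximity to Condorcet and Rare Ties: taking $n = 0$, the Condorcet winner $A$ is trivially ``made'' a Condorcet winner by the null improvement of one margin by $0$, whereas no other candidate $B$ is made a Condorcet winner by improving every one of $B$'s margins by $0$, so Proximity to Condorcet gives $F(T) \neq \{B\}$ for every $B \neq A$, and Rare Ties yields $F(T) = \{A\}$. This handles the two classes that have a Condorcet winner.

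For the top-cycle classes (types 3 and 4), the Condorcet loser $S$ has no wins, and so every candidate beating $S$ vacuously dominates $S$ in wins; Win Dominance therefore eliminates $S$. Among the three remaining candidates in the 3-cycle, the proposed method selects the one $N$ with the smallest loss. Applying Proximity to Condorcet with $A = N$ and $n$ one greater than $N$'s loss margin flips $N$'s sole loss to make $N$ a Condorcet winner, but the same uniform improvement is too small to flip either of the other two candidates' strictly larger losses; hence Proximity to Condorcet rules them out and Rare Ties forces $F(T) = \{N\}$, matching the proposed method.

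The four-cycle classes (types 5 and 6) form the technical heart of the proof. Here $N$ and $E$ are the Copeland winners and $W, S$ have just one head-to-head win each; Proximity to Condorcet with $A$ equal to the winning Copeland candidate and $n$ one greater than its single loss eliminates the other Copeland winner as above. To eliminate the non-Copeland candidates $W$ and $S$, the plan is to argue by contradiction via Independence of Irrelevant Defeats, building auxiliary tournaments. For example, to rule out $F(T) = \{S\}$ in an SL four cycle, one enlarges $m(E, W)$ (a margin involving neither $S$ nor $N$) so that $E$'s margin over $W$ is at least $S$'s; then in the modified tournament $T^+$, $E$ dominates $S$ in wins, Win Dominance eliminates $S$, Proximity to Condorcet (now strong enough to flip $W$'s enlarged loss to $E$) eliminates $W$ and $E$, and Rare Ties yields $F(T^+) = \{N\}$; Independence of Irrelevant Defeats with $A = S$ and $B = N$ then forbids $F(T^+) = \{N\}$, contradicting $F(T) = \{S\}$. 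An analogous but more delicate move---flipping $m(N, S)$ so that $S$ comes to dominate $W$ in wins in the modified tournament, then pinning down the winner of the modified tournament via the remaining axioms---rules out $F(T) = \{W\}$.

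The hardest part will be verifying that these auxiliary constructions succeed uniformly across all the ordinal configurations of margins compatible with the four-cycle classes, particularly in the degenerate subcases where $m(E,W)$ and $m(S,W)$ are both smaller than $m(W,N)$ and so Proximity to Condorcet alone is insufficient to eliminate $W$ directly. I would treat each such subcase separately, and in each one rely on Independence of Irrelevant Defeats, possibly together with Win Monotonicity, to transfer a contradiction obtained in a more tractable auxiliary tournament back to $T$; a symmetric analysis then handles the LS four cycle.
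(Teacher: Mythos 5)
Your overall skeleton matches the paper's: the $n=0$ instance of Proximity to Condorcet plus Rare Ties for the Condorcet-winner classes, Win Dominance plus Proximity to Condorcet for the top cycles, and IID-based contradictions built on auxiliary tournaments for the four cycles; your elimination of $S$ in the SL four cycle is essentially the paper's argument. But two of your constructions have genuine gaps. First, your elimination of $W$ by flipping $m(N,S)$ fails in exactly the subcase you need it for. After the flip, $S$ dominates $W$ in wins, so $W$ is out; but the remaining candidates with a single loss in the modified tournament are $E$ (loss $m(N,E)$) and $S$ (loss $m(E,S)$), and nothing in the four-cycle hypothesis prevents $m(E,S)<m(N,E)$. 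In that case Proximity to Condorcet forces the modified tournament to elect $S$, not $E$, and Independence of Irrelevant Defeats then yields no contradiction, because the changed margin $N$ vs.\ $S$ involves the new winner $S$. The missing idea is the paper's preprocessing step: before flipping, apply Win Monotonicity (with $A=W$, $B=E$, $X=S$) to raise $m(W,N)$ and $m(E,S)$ both above $m(N,E)$; only then does the flipped tournament provably elect $E$, giving the IID contradiction. Your diagnosis of the ``degenerate subcases'' in terms of $m(E,W)$ and $m(S,W)$ versus $m(W,N)$ points at the wrong margins.

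Second, the LS four cycle is not handled by a ``symmetric analysis.'' There the intended winner is $E$ and the candidate to be excluded is $S$, so IID requires perturbing a margin involving neither $S$ nor $E$; the only such margin is $W$ vs.\ $N$, and no perturbation of it produces a forced winner that contradicts $F(T)=\{S\}$ (shrinking or flipping it makes $N$ the forced winner, which IID does not forbid since the changed margin involves $N$). The paper instead reduces LS to SL: assuming $F(T)=\{S\}$, apply Win Monotonicity to boost $m(S,W)$ and $m(N,E)$ by the same $n$ until $m(N,E)+n>m(W,N)$, producing an SL four cycle in which $S$ must still win, contradicting the already-established SL case. (A smaller slip: in your SL argument you enlarge $m(E,W)$ only past $m(S,W)$; to eliminate $W$ in the auxiliary tournament by Proximity to Condorcet you need it past $m(W,N)$ as well, so make it the largest margin outright, as the paper does.)
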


\begin{proof} Suppose $T$ is a uniquely-weighted tournament, so by \textbf{Rare Ties}, there is a unique winner in $T$ according to $F$. If $T$ has a Condorcet winner, then \textbf{Proximity to Condorcet} implies that $F$ must select the Condorcet winner uniquely, in agreement with the proposed method; trivially, the Condorcet winner can be made a Condorcet winner by improving any one of their margins by $0$, whereas no other candidate can be made a Condorcet winner by increasing every one of their margins by $0$, so no other candidate is the unique winner, so the Condorcet winner must be the unique winner. \textbf{Proximity to Condorcet} also implies that if $T$ has only three candidates in a majority cycle, then $F$ must select the candidate whose loss is smallest. 

Suppose $T$ has four candidates. Every uniquely-weighted tournament for four candidates falls into one of the classes shown in Figure~\ref{ElectionTypes}. As above, \textbf{Proximity to Condorcet} implies that $F$ must choose the Condorcet winner $N$ in the linear order and the bottom cycle, in agreement with the proposed method.

In the top cycles, \textbf{Win Dominance} implies that $S$ cannot win according to $F$, since $S$ is dominated in wins by all other candidates. By \textbf{Proximity to Condorcet}, neither $W$ nor $E$ can be the winner, since $N$'s one loss is smaller than $W$'s and $E$'s. Hence $N$ is the winner, in agreement with the proposed method.

Next, we turn to the four cycles. First, we argue that $W$ cannot win in any four cycle:

\begin{equation}
\begin{tikzpicture}[baseline={(current bounding box.center)}]

\node[circle,draw, minimum width=0.3in] at (0,0) (W) {$W$}; 
\node[circle,draw,minimum width=0.3in ] at (4,0) (E) {$E$}; 
\node[circle,draw,minimum width=0.3in ] at (2,2) (N) {$N$}; 
\node[circle,draw,minimum width=0.3in] at (2,-2) (S) {$S$}; 

\path[->,draw,  thick] (W) to node {} (N);
\path[->,draw,  thick] (N) to node  {}     (E);
\path[->,draw,thick] (E) to[pos=.75]  node   {} (W);
\path[->,draw,  thick] (S) to node {} (W);
\path[->,draw,thick] (N) to node {} (S);
\path[->,draw,  thick] (E) to node {} (S);

\node at (2,-3) {four cycle}; 
\end{tikzpicture}\label{FC}
\end{equation}
Suppose for contradiction that $W$ is the winner in (\ref{FC}). Then we can simultaneously strengthen $W$'s margin vs.~$N$ by some sufficiently large $n$ and $E$'s margin vs.~$S$ by $n$ to obtain a weighted tournament of the following form, while maintaining the property of being uniquely weighted:
\begin{equation}
\begin{tikzpicture}[baseline={(current bounding box.center)}] 

\node[circle,draw, minimum width=0.3in] at (0,0) (W) {$W$}; 
\node[circle,draw,minimum width=0.3in ] at (4,0) (E) {$E$}; 
\node[circle,draw,minimum width=0.3in ] at (2,2) (N) {$N$}; 
\node[circle,draw,minimum width=0.3in] at (2,-2) (S) {$S$}; 

\path[->,draw,  thick] (N) to node   { } (E);
\path[->,draw,thick] (E) to[pos=.75]  node   {} (W);
\path[->,draw,  thick] (S) to node {} (W);
\path[->,draw,thick] (N) to node {} (S);
\path[->,draw,  thick] (E) to node {} (S);
\path[->,draw,  thick] (W) to node {} (N);

\node[fill=white] at (.2,1) {\footnotesize larger than $m(N,E)$}; 

\node at (-.8,0) {};
\node[fill=white] at (3.75,-1) {\footnotesize larger than $m(N,E)$}; 
\end{tikzpicture}\label{BoostEdges}
\end{equation}
By \textbf{Win Monotonicity}, $W$ must still be the winner in (\ref{BoostEdges}). Now suppose we flip $N$'s victory over $S$ to a victory for $S$ over $N$ with the largest of all margins (so we maintain the uniquely-weighted property):
\begin{equation}
\begin{tikzpicture}[baseline={(current bounding box.center)}]

\node[circle,draw, minimum width=0.3in] at (0,0) (W) {$W$}; 
\node[circle,draw,minimum width=0.3in ] at (4,0) (E) {$E$}; 
\node[circle,draw,minimum width=0.3in ] at (2,2) (N) {$N$}; 
\node[circle,draw,minimum width=0.3in] at (2,-2) (S) {$S$};

\path[->,draw,  thick] (N) to node   { } (E);
\path[->,draw,thick] (E) to[pos=.75]  node   {} (W);
\path[->,draw,  thick] (S) to node {} (W);
\path[->,draw,thick] (S) to[pos=.625]  node[fill=white] {\footnotesize largest} (N);
\path[->,draw,  thick] (E) to node[fill=white]  {}  (S);
\path[->,draw,  thick] (W) to node {} (N);

\node[fill=white] at (.2,1) {\footnotesize larger than $m(N,E)$}; 

\node at (-.8,0) {};
\node[fill=white] at (3.75,-1) {\footnotesize larger than $m(N,E)$}; 
\end{tikzpicture}\label{FlipEdge}
\end{equation}
In (\ref{FlipEdge}),  $S$ dominates $W$ in wins, so $W$ cannot win by \textbf{Win Dominance}. By \textbf{Proximity to Condorcet}, neither $N$ nor $S$ can win, since they each have a loss larger than $E$'s single loss. Thus, $E$ must be the winner in (\ref{FlipEdge}). But this contradicts \textbf{Independence of Irrelevant Defeats}, since from (\ref{BoostEdges}) to (\ref{FlipEdge}) we only changed a margin that involves neither $W$ nor $E$. From this contradiction, we conclude that  $W$ is not the winner in a four cycle.

Now we claim that $N$ wins in the SL four cycle:

\begin{equation}
\begin{tikzpicture}[baseline={(current bounding box.center)}] 

\node[circle,draw, minimum width=0.3in] at (0,0) (W) {$W$}; 
\node[circle,draw,minimum width=0.3in ] at (4,0) (E) {$E$}; 
\node[circle,draw,minimum width=0.3in] at (2,2) (N) {$N$}; 
\node[circle,draw,minimum width=0.3in] at (2,-2) (S) {$S$}; 

\path[->,draw,  thick] (W) to node[fill=white] {\footnotesize smaller} (N);
\path[->,draw,  thick] (N) to node[fill=white]   {\footnotesize larger} (E);
\path[->,draw,thick] (E) to[pos=.75]  node   {} (W);
\path[->,draw,  thick] (S) to node {} (W);
\path[->,draw,thick] (N) to node {} (S);
\path[->,draw,  thick] (E) to node {} (S);

\node at (2,-3) {SL four cycle}; 

\end{tikzpicture}\label{SLEq}
\end{equation}
We already know that $W$ cannot win in (\ref{SLEq}). In addition, we know that $E$ cannot win by \textbf{Proximity to Condorcet}, since $N$'s single loss is smaller than $E$'s single loss. Now suppose for contradiction that $S$ is the winner. Consider this uniquely-weighted tournament, where $E$'s margin vs.~$W$ is the largest of all margins: 

\begin{equation}
\begin{tikzpicture}[baseline={(current bounding box.center)}]  

\node[circle,draw, minimum width=0.3in] at (0,0) (W) {$W$}; 
\node[circle,draw,minimum width=0.3in ] at (4,0) (E) {$E$}; 
\node[circle,draw,minimum width=0.3in] at (2,2) (N) {$N$}; 
\node[circle,draw,minimum width=0.3in] at (2,-2) (S) {$S$}; 

\path[->,draw,  thick] (W) to node[fill=white] {\footnotesize smaller} (N);
\path[->,draw,  thick] (N) to node[fill=white]   {\footnotesize larger} (E);
\path[->,draw,thick] (E) to[pos=.75]  node[fill=white]   {\footnotesize largest} (W);
\path[->,draw,  thick] (S) to node {} (W);
\path[->,draw,thick] (N) to node {} (S);
\path[->,draw,  thick] (E) to node {} (S);

\end{tikzpicture}\label{BoostEW}
\end{equation}
In (\ref{BoostEW}), $W$ and $E$ cannot win, for the same reasons as before. But now $S$ cannot win by \textbf{Win Dominance}, since $E$ dominates $S$ in wins. Thus, $N$ must win in (\ref{BoostEW}). But this contradicts \textbf{Independence of Irrelevant Defeats}, since from (\ref{SLEq}) to (\ref{BoostEW}) we only changed a margin that involves neither $S$ nor $N$. From this contradiction, we conclude that $S$ cannot win in the SL four cycle, which leaves $N$ as the only possible winner, in agreement with the proposed method.

Finally, we claim that $E$ wins in the LS four cycle:

\begin{equation}
 \begin{tikzpicture}[baseline={(current bounding box.center)}]   

\node[circle,draw, minimum width=0.3in] at (0,0) (W) {$W$}; 
\node[circle,draw,minimum width=0.3in ] at (4,0) (E) {$E$}; 
\node[circle,draw,minimum width=0.3in ] at (2,2) (N) {$N$}; 
\node[circle,draw,minimum width=0.3in] at (2,-2) (S) {$S$}; 

\path[->,draw,  thick] (W) to node[fill=white] {\footnotesize larger} (N);
\path[->,draw,  thick] (N) to node[fill=white]   {\footnotesize smaller} (E);
\path[->,draw,thick] (E) to[pos=.75]  node   {} (W);
\path[->,draw,  thick] (S) to node {} (W);
\path[->,draw,thick] (N) to node {} (S);
\path[->,draw,  thick] (E) to node {} (S);

\node at (2,-3) {LS four cycle}; 

\end{tikzpicture}\label{LSEq}
\end{equation}
As above, $W$ cannot win in (\ref{LSEq}), and $N$ cannot win by \textbf{Proximity to Condorcet}, since $E$'s single loss is smaller than $N$'s single loss. Now suppose for contradiction that $S$ wins. Then we can strengthen $S$'s victory over $W$ by a sufficiently large $n$ such that when we also strengthen $N$'s victory over $E$ by $n$, we obtain a uniquely-weighted SL four cycle as in (\ref{SLEq}). By \textbf{Win Monotonicity}, $S$ must still win, but this contradicts the fact proven above that $S$ cannot win in an SL four cycle. From this contradiction, we conclude that $S$ cannot win in the LS four cycle, leaving $E$ as the only possible winner, in agreement with the proposed method.\end{proof}

We finish this section by proving that each of the axioms used for Theorem \ref{MainFour} is independent of the others.

\begin{proposition}\label{IndProp} For each axiom in Table~\ref{AxTable}, there is a weighted tournament solution for up to four candidates that violates that axiom while satisfying all the other axioms in Table~\ref{AxTable}.
\end{proposition}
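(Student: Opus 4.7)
The plan is to exhibit five distinct weighted tournament solutions for up to four candidates, one witness per axiom. Three of the five are already in hand. Copeland itself violates only Rare Ties, since it returns multi-element winner sets in every top cycle and every four cycle in Figure~\ref{ElectionTypes}; the other four axioms are routine from its definition (in a four-candidate uniquely-weighted tournament every Copeland winner has at most one loss, which makes Proximity, IID, Win Monotonicity, and Win Dominance easy to verify). Minimax violates only Win Dominance, via an explicit four-candidate tournament in which a Condorcet loser has the smallest worst loss; satisfaction of Proximity, IID, Win Monotonicity, and Rare Ties for Minimax all flow from its being a function of each candidate's single worst loss. The CGB$^+$ rule of Remark~\ref{BordaNote} witnesses independence of Proximity to Condorcet: Remark~\ref{BordaNote} already asserts that it satisfies IID, Win Monotonicity, and Win Dominance; the appended Minimax tie-break yields Rare Ties on uniquely-weighted tournaments; and a specific four-cycle example in which the Borda-sum ordering of the two Copeland winners is opposite to the ordering of their single losses displays a Proximity failure.

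For the remaining two cases I would construct ad hoc variants of the proposed method, designed to deviate from it only in a narrowly chosen subclass of the four-cycle tournaments in Figure~\ref{ElectionTypes}. To witness independence of Independence of Irrelevant Defeats, define $F_{\mathrm{IID}}$ to agree with the proposed method on every uniquely-weighted tournament except in one parameterized four-cycle family, where among the two Copeland winners it uses a tie-breaker that reads the size of a margin between two non-Copeland-winners, choosing the alternative still compatible with Proximity to Condorcet (possible because Proximity alone does not pin down a unique candidate in every four cycle). Two uniquely-weighted tournaments differing only on that irrelevant edge then witness an IID failure, while Win Dominance holds (the output is always a Copeland winner), Win Monotonicity holds (the exceptional family is stable under the simultaneous win-boost of the axiom), and Rare Ties holds (the output is always a singleton). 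To witness independence of Win Monotonicity, define $F_{\mathrm{WM}}$ analogously, with the deviation engineered to swap the unique winners of two tournaments related by a simultaneous strengthening of an $A$-victory and a $B$-victory—again confined to a subclass where Proximity, IID, Win Dominance, and Rare Ties remain intact.

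The main obstacle is the calibration of these last two constructions: the deviation must be tightly localized so that exactly one axiom breaks. For $F_{\mathrm{IID}}$, I must choose the exceptional family so that the irrelevant-edge tie-breaker never flips the winner across the threshold at which Proximity to Condorcet would itself force a unique choice. For $F_{\mathrm{WM}}$, I must engineer the monotonicity failure using only boosts of existing victories—never a flip from loss to win—since a flip would potentially trigger Proximity to Condorcet and thereby damage that axiom as well, and the exceptional pair must involve boosts of margins between the two implicated candidates only, so that IID is not dragged in. The verification in each case reduces to a finite case check over the six four-candidate tournament classes in Figure~\ref{ElectionTypes}, mirroring the case analysis in the proof of Theorem~\ref{MainFour}.
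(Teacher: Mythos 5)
Your witnesses for Rare Ties (Copeland), Win Dominance (Minimax), and Proximity to Condorcet (CGB$^+$) coincide with the paper's, and your ad hoc construction for Win Monotonicity is in the same spirit as the paper's witness $G$, which hard-codes the output $\{S\}$ on a one-parameter family of four cycles, agrees with the proposed method elsewhere, and then runs a finite case check over the edges $\{E,W\}$, $\{W,N\}$, $\{N,E\}$ to confirm that Independence of Irrelevant Defeats survives. The genuine gap is in your witness for Independence of Irrelevant Defeats. You propose to deviate from the proposed method only by re-tie-breaking \emph{among the two Copeland winners} of a four cycle as a function of the irrelevant $W$--$S$ margin, ``choosing the alternative still compatible with Proximity to Condorcet.'' No such freedom exists. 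In every uniquely-weighted four cycle the two Copeland winners $N$ and $E$ each have exactly one loss, and Proximity to Condorcet already excludes whichever of them has the larger loss: take $n$ to be one more than the smaller loss; the candidate with the smaller loss becomes a Condorcet winner by improving that single margin by $n$, while the other does not become a Condorcet winner even after improving every one of its margins by $n$. So Proximity pins down a unique eligible candidate among the Copeland winners in every four cycle, and any tie-breaker that ever selects the other one breaks Proximity, not just IID. Your parenthetical justification conflates ``Proximity does not pin down a unique candidate overall'' (true: the one-win candidates $W$ and $S$ can survive Proximity when their losses are small) with ``Proximity leaves a choice between the two Copeland winners'' (false).

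The repair is to let the IID witness sometimes select a \emph{non}-Copeland winner, which is what the paper does via Uncovered-Minimax: from the uncovered candidates, select the one whose worst loss is smallest. In the LS four cycle with margins $m(W,N)=8$, $m(N,E)=6$, $m(E,W)=4$, $m(S,W)=2$, $m(E,S)=12$, the one-win candidate $W$ is uncovered and has the smallest worst loss, so it wins; flipping the irrelevant $N$--$S$ edge makes $W$ covered by $S$, and the winner jumps to $E$. Crucially, $W$ is not excluded by Proximity there (its losses of $4$ and $2$ both lie below the threshold $n=7$ at which $E$ becomes a Condorcet winner by one improvement), and being uncovered it is not dominated in wins; and because Uncovered-Minimax is a uniform rule rather than a hand-tuned exception, Win Monotonicity can be verified once and for all (boosting wins preserves uncoveredness and leaves worst losses unchanged). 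If you insist on an ad hoc family-based construction for this axiom, the exceptional output must likewise be $W$ or $S$, and you then owe the same Proximity, Win Dominance, and Win Monotonicity checks for the exceptional family that the paper carries out for its $G$.
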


\begin{proof} Table~\ref{AxTable} already shows that Win Dominance and Rare Ties are independent of the other axioms. 

To prove that Proximity to Condorcet is independent of the others, consider CGB$^+$ as defined in Remark~\ref{BordaNote}. It violates Proximity to Condorcet, since it chooses $E$ in the following weighted tournament in which $E$ has the greatest symmetric Borda score (of $16$, compared to $N$'s score of $14$), even though $N$ has the smallest loss and is therefore closer to being the Condorcet winner:
\begin{equation}
\begin{tikzpicture} [baseline={(current bounding box.center)}]    

\node[circle,draw, minimum width=0.3in, fill=gray!75] at (0,0) (W) {$W$}; 
\node[circle,draw,minimum width=0.3in, fill=gray!75] at (4,0) (E) {$E$}; 
\node[circle,draw,minimum width=0.3in, fill=gray!75] at (2,2) (N) {$N$}; 
\node[circle,draw,minimum width=0.3in] at (2,-2) (S) {$S$}; 

\path[->,draw,  thick] (W) to node[fill=white] {2} (N);
\path[->,draw,  thick] (N) to node[fill=white]   {6} (E);
\path[->,draw,  thick] (E) to[pos=.75]  node[fill=white]   {8} (W);
\path[->,draw,thick] (W) to node[fill=white] {4} (S);
\path[->,draw,thick] (N) to[pos=.75]  node[fill=white] {10} (S);
\path[->,draw,thick] (E) to node[fill=white] {14} (S);

\end{tikzpicture}
\end{equation}
\noindent However, CGB$^+$ satisfies Independence of Irrelevant Defeats, since a candidate's number of wins and sizes of wins and losses depend only on head-to-head matches involving that candidate, so changing head-to-head matches involving neither $A$ nor $B$ cannot change the winner from $A$ to $B$. It also satisfies Win Monotonicity, since increasing one of $A$'s wins by as much as one of $B$'s wins preserves $A$'s status as a Copeland winner, increases $A$'s sum of margins by as much as it does $B$'s, and it does not change $A$'s or $B$'s losses. Moreover, like any solution that chooses from among the Copeland winners, CGB$^+$ satisfies Win Dominance. Finally, it satisfies Rare Ties since in a uniquely-weighted tournament, there cannot be a tie in the sizes of losses.

To prove that Independence of Irrelevant Defeats is independent of the other axioms, consider the weighted tournament solution that we might call Uncovered-Minimax: select from the set  $\mathrm{Uncovered}(\mathsf{C},m)$ of  \textit{uncovered} candidates (recall Section~\ref{WinDom}) the one whose worst loss is smallest. Formally, the method outputs
\[\underset{X\in \mathrm{Uncovered}(\mathsf{C},m)}{\mathrm{argmin}} \mathrm{max} (\{ m(Y,X)\mid Y\in\mathsf{C}\mbox{ and } m(Y,X) > 0 \}).
 \]
Now consider the following LS four cycle, with uncovered candidates highlighted in blue:
 \begin{equation}
  \begin{tikzpicture} [baseline={(current bounding box.center)}]    

\node[circle,draw, minimum width=0.3in,fill=blue!25] at (0,0) (W) {$W$}; 
\node[circle,draw,minimum width=0.3in,fill=blue!25 ] at (4,0) (E) {$E$}; 
\node[circle,draw,minimum width=0.3in,fill=blue!25 ] at (2,2) (N) {$N$}; 
\node[circle,draw,minimum width=0.3in] at (2,-2) (S) {$S$}; 

\path[->,draw,  thick] (W) to node[fill=white] {8} (N);
\path[->,draw,  thick] (N) to node[fill=white]   {6} (E);
\path[->,draw,thick] (E) to[pos=.75]  node[fill=white]    {4} (W);
\path[->,draw,  thick] (S) to node[fill=white]  {2} (W);
\path[->,draw,thick] (N) to node {} (S);
\path[->,draw,  thick] (E) to node[fill=white] {12} (S);

\end{tikzpicture}\label{UMTourn}
\end{equation}
 In (\ref{UMTourn}), Uncovered-Minimax selects $W$, since $W$ is uncovered and its worst loss is smaller than that of any other candidate. But if we flip   $N$'s victory over $S$ to become a victory for $S$ over $N$, then $W$ will become covered by $S$, so Uncovered-Minimax will select $E$, since $E$ will then have the smallest loss among uncovered candidates. Thus, changing the irrelevant margin between $N$ and $S$ changes the winner from $W$ to $E$, thereby violating Independence of Irrelevant Defeats. 

On the other hand, Uncovered-Minimax satisfies all the other axioms. Suppose $A$ and $B$ are as in the statement of Proximity to Condorcet. If there is a Condorcet winner, then $B$ is covered and cannot be selected. If there is no Condorcet winner, then it follows from the assumption on $A$ in Proximity to Condorcet that $A$ has the most wins (possibly tied with others for most wins), so they cannot be covered, and their loss is smaller than $B$'s losses, so again $B$ cannot be selected. Hence Uncovered-Minimax satisfies Proximity to Condorcet. It also satisfies Win Monotonicity, since improving one of $A$'s wins by as much as one of $B$'s preserves $A$'s status as an uncovered candidate whose worst loss is smallest (note that no new candidates become uncovered). Of course Uncovered-Minimax satisfies Win Dominance, since if a candidate is not covered by any other candidate, then they are not dominated in wins by any other candidate. Finally, it satisfies Rare Ties since there cannot be a tie for the worst loss in a uniquely-weighted tournament.

Finally, we prove that Win Monotonicity is independent of the other axioms. To do so, we will use weighted tournaments of the following form, where the margin of $W$ vs.~$N$ is greater than 10 and the other margins are exactly as indicated:
\begin{equation}
\begin{tikzpicture}[baseline={(current bounding box.center)}]    

\node[circle,draw, minimum width=0.3in] at (0,0) (W) {$W$}; 
\node[circle,draw,minimum width=0.3in ] at (4,0) (E) {$E$}; 
\node[circle,draw,minimum width=0.3in ] at (2,2) (N) {$N$}; 
\node[circle,draw,minimum width=0.3in ] at (2,-2) (S) {$S$}; 

\path[->,draw,  thick] (W) to node[fill=white] {$> 10$} (N);
\path[->,draw,  thick] (N) to node[fill=white]    {10}     (E);
\path[->,draw,thick] (E) to[pos=.75]  node[fill=white]   {6} (W);
\path[->,draw,  thick] (S) to node[fill=white] {8} (W);
\path[->,draw,thick] (N) to[pos=.75] node[fill=white]  {4} (S);
\path[->,draw,  thick] (E) to node[fill=white] {2} (S);
\end{tikzpicture}\label{GTourn}
\end{equation}
Define a weighted tournament solution $G$ as follows: if $T$ is a weighted tournament of the form in (\ref{GTourn}), then the output is $\{S\}$; otherwise the output agrees with the proposed method. Then $G$ clearly violates Win Monotonicity, since strengthening $S$'s margin vs.~$W$ in $T$ by $1$ while also strengthening $E$'s margin vs.~$W$ by $1$ results in a weighted tournament in which we pick the winner according to the proposed method, which is $E$. On the other hand, $G$ clearly satisfies Proximity to Condorcet, Win Dominance, and Rare Ties, since each of these axioms only concerns one weighted tournament at a time, and $G$ satisfies the axioms in $T$, and the proposed method satisfies the axioms in every other weighted tournament. 

It only remains to show that $G$ satisfies Independence of Irrelevant Defeats (IID). Since the proposed method does, it suffices to show that for each pair $\{X,Y\}$ of candidates not including $S$, if we start with a tournament $T$ as in (\ref{GTourn}) and then change the margin between $X$ and $Y$, then the output of the method cannot become $\{Z\}$ for some $Z\not\in \{S,X,Y\}$. There are three pairs of candidates to consider, namely $\{E,W\}$, $\{W,N\}$, and $\{N,E\}$. If we modify the margin of $E$ vs.~$W$ but keep it positive, then $\{E\}$ is the output of the proposed method, which is consistent with IID. On the other hand, if we flip $E$'s victory over $W$ to become a victory for $W$ over $E$, then $\{W\}$ is the output of the proposed method, which is again consistent with IID. Next, if we modify the margin of $W$ vs.~$N$ but keep it greater than $10$, then the output is still $\{S\}$, and if we keep the margin of $W$ vs.~$N$ positive but make it no greater than $10$, then the output of the proposed method is either $\{N,E\}$ or $\{N\}$, which is consistent with IID. On the other hand, if we flip $W$'s victory over $N$ to be a victory for $N$ over $W$, then $\{N\}$ is the output of the proposed method, which is again consistent with IID. Finally, if we modify $N$'s margin vs.~$E$ in any way, then one of $N$ or $E$ wins, which is again consistent with IID. This completes the proof that $G$ satisfies IID.\end{proof}

\section{Final Five}\label{Five}

Although Final Four elections have already been implemented in Alaska (but using Instant Runoff Voting instead of a Condorcet method), there is also support for Final Five elections, wherein five candidates compete in a general election under ranked voting, after having been selected in a preliminary process to compete in the general election (see  \citealt{Gehl2020}, \citealt{Gehl2023}). In this section, we axiomatically characterize an extension of the proposed method to elections with up to five candidates, by replacing one of our axioms from the previous section and bringing in one more.

The idea of the first axiom is that insofar as we think that a candidate with more head-to-head wins than any other candidate, i.e., a unique Copeland winner, deserves to be elected, then when there is a tie for the most head-to-head wins, we should elect the candidate who is \textit{closest} to being the unique Copeland winner, at least if it is clear who that candidate is:
\begin{itemize}
\item \textbf{Proximity to Copeland}: if $A$ can be made the unique Copeland winner by increasing one of $A$'s margins by some non-negative integer $n$, while $B$ cannot be made the unique Copeland winner by increasing \textit{every one} of $B$'s margins by $n$, then $F(T)\neq \{B\}$. 
\end{itemize}
All we have done is taken the Proximity to Condorcet axiom and replaced `Condorcet winner' with `unique Copeland winner'. The idea in both cases is the same: once you have committed to a certain kind of candidate---the Condorcet winner or a unique Copeland winner---being a clear winner, then when such a candidate does not exist, it is very natural to elect the candidate \textit{closest} to being such a clear winner, at least if someone clearly merits the distinction of ``closest.''\footnote{One might initially guess that Proximity to Copeland implies Proximity to Condorcet, since a Condorcet winner is a unique Copeland winner. However, the implication does not hold, as shown by the following weighted tournament solution: the score of a candidate $A$ is the least $n$ such that improving \textit{every one} of $A$'s margins by $n$ makes $A$ the unique Copeland winner; and the candidates with the lowest score win.  This solution satisfies Proximity to Copeland, since if $A$ can be made a unique Copeland winner by improving one of $A$'s margins by $n$, then $A$ can also be made a unique Copeland winner by improving every one of $A$'s margins by $n$, and if $B$ cannot be made a unique Copeland winner by improving every one of $B$'s margins by $n$, it follows that $A$'s score as defined above is strictly lower than $B$'s, so $B$ cannot be the winner. On the other hand, concerning Proximity to Condorcet, consider a top four-cycle as in Figure~\ref{NoUniqueCopeland} in which the three smallest losses in the whole weighted tournament belong to $Y$, but $Y$ also has the largest loss of all, which it suffers against one of the non-gray candidates in the four-cycle who has two losses. Then $Y$ has the best score as just defined, since by flipping $Y$'s three small losses, $Y$ will become the unique Copeland winner with three wins, as all other candidates will have at most two wins, and no other candidate can become the unique Copeland winner by such a small change of margin. However, the selection of $Y$ violates Proximity to Condorcet: a gray candidate in the four cycle can be made a Condorcet winner by improving one of their margins by some $n$ such that improving all of $Y$'s margins by that $n$ will not make $Y$ the Condorcet winner, due to $Y$ having the largest loss of any candidate. Despite the fact that Proximity to Copeland does not imply Proximity to Condorcet, all applications of Proximity to Condorcet in the proof of Theorem~\ref{MainFour} can be replaced by applications of Proximity to Copeland, for the reasons noted in the proof of Theorem~\ref{MainFive}.}

Admittedly, this new axiom simply takes for granted the election of the unique Copeland winner, when one exists. Thus, it will not convince those skeptical of electing the unique Copeland winner (though it is plausible that many voters will find electing a candidate who has more head-to-head wins than any other to be highly intuitive, based on analogies with sports). However, the purpose of the axiom here is not to convince such skeptics but rather to guide us to a natural generalization to Final Five elections of the proposed method from the previous section for Final Four elections.
And in this role, the axiom serves us well, since it helps distinguish the Final Five proposal briefly floated in \citealt{Holliday2025} from the Final Five proposal in this paper:
\begin{itemize}
\item Copeland-Global-Minimax (\citealt[\S~2.3]{Holliday2025}): select the candidate with the most head-to-head wins; if multiple candidates tie for the most wins, select the one \textit{whose worst head-to-head loss is smallest}.\footnote{The reason for the `Global' in the name of this method will be explained in Remark~\ref{LocalGlobal}.} Formally,
\[\mathrm{CGM}(\mathsf{C},m)=  \underset{X\in \mathrm{Copeland}(\mathsf{C},m)}{\mathrm{argmin}} \mathrm{max}( \{ m(Y,X)\mid Y\in\mathsf{C}\mbox{ and } m(Y,X) > 0 \}),
 \]
  where we set $\mathrm{max}(\varnothing)=0$.
\item Most Wins, Smallest Loss (this paper): select the candidate with the most head-to-head wins; if multiple candidates tie for the most wins, select the one \textit{who has the smallest head-to-head loss}. Formally,
\[\mathrm{MWSL}(\mathsf{C},m)=  \underset{X\in \mathrm{Copeland}(\mathsf{C},m)}{\mathrm{argmin}} \mathrm{min} (\{ m(Y,X)\mid Y\in\mathsf{C}\mbox{ and } m(Y,X) > 0 \}),
 \]
  where we set $\mathrm{min}(\varnothing)=0$.
\end{itemize}
These proposals lead to different results only in one extremely improbable type of Final Five election, shown in Figure \ref{Penta}, in which every candidate has exactly two wins and two losses (in every other five-candidate uniquely-weighted tournament, the Copeland winners have at most one loss). In Figure \ref{Penta}, suppose $A$ loses by $6$ and by $2$, and $B$ loses by $4$ and by $4$, and all other candidates lose by larger margins. Copeland-Global-Minimax declares $B$ the winner, since $B$'s worst loss of $4$ is smaller than $A$'s worst loss of $6$. By contrast, MWSL declares $A$ the winner, since $A$ has the smallest loss, namely a loss by $2$. Here $A$ is closest to being the unique Copeland winner, since we need only change their smallest loss of $2$ by a margin of $3$ to flip it to a win, and then they would be elected outright based on their number of wins, whereas changing anyone else's losses by a margin of $3$ would not change the win-loss records. Indeed, it is easy to see that MWSL satisfies Proximity to Copeland, whereas Copeland-Global-Minimax violates it, as shown by this example.

The practical advantage of MWSL over Copeland-Global-Minimax for Final Five elections is the greater simplicity of MWSL: we do not have to subject voters to the potentially confusing concept of ``the candidate whose worst loss is smallest.'' But the new proposal also has a philosophical advantage in that it can be motivated by the idea of closeness to a clear winner that we used before.

\begin{figure}[h!]
\begin{center}
\begin{tikzpicture}

\node[circle,draw] at (0,.1) (c) {$\quad\,$}; 
\node[circle,draw] at (4,.1) (a) {$\quad\,$}; 
\node[circle,draw] at (2,1.75) (b) {$\quad\,$}; 
\node[circle,draw] at (1,-2) (d) {$\quad\,$}; 
\node[circle,draw] at (3,-2) (e) {$\quad\,$};

\path[->,draw,thick] (b) to (a);
\path[->,draw,thick] (c) to (b);  
\path[->,draw,thick] (c) to (a);

\path[->,draw,thick] (d) to (c);  
\path[->,draw,thick] (d) to (b);

\path[->,draw,thick] (e) to (d);  
\path[->,draw,thick] (e) to (c);

\path[->,draw,thick] (a) to (e);  
\path[->,draw,thick] (a) to (d);

\path[->,draw,thick] (b) to (e);

  \end{tikzpicture}
\end{center}
\caption{The symmetric pentagram, which Harrison-Trainor \citeyearpar{HT2020} calls $T_{12}$.}\label{Penta}
\end{figure}
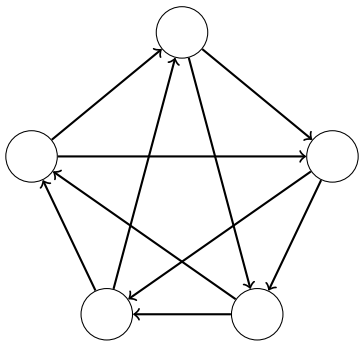

\begin{remark}\label{LocalGlobal}
As an aside, we note that like the proposed method from Section~\ref{Prelim}, the variant method from Section~\ref{Prelim} can be generalized in two different ways to Final Five elections, depending on whether we replace the $\iota$ in its definition by $\mathrm{max}$ or $\mathrm{min}$. We do not know of a good name for the generalization of the variant method using $\mathrm{min}$, but the generalization using $\mathrm{max}$ does have one:
\begin{itemize}
\item Copeland-Local-Minimax: select the candidate with the most head-to-head wins; if multiple candidates tie for the most wins, then among the candidates in that group, select the candidate who beats each of the others in that group head-to-head; if there is no such candidate, then select the candidate in that group whose worst loss to another candidate in that group is smallest. Formally,
\[\mathrm{CLM}(\mathsf{C},m)=  \underset{X\in \mathrm{Copeland}(\mathsf{C},m)}{\mathrm{argmin}}\; \mathrm{max}(\{ m(Y,X)\mid Y\in \mathrm{Copeland}(\mathsf{C},m)\mbox{ and } m(Y,X) > 0 \}).
 \]
\end{itemize}
We call this `Copeland-Local-Minimax' because it is equivalent to taking the given weighted tournament $T=(\mathsf{C},m)$, deleting from it all candidates who are not Copeland winners to obtain a weighted tournament $T^{-}$ and then applying Minimax ``locally'' to $T^{-}$. By contrast, Copeland-Global-Minimax as defined above chooses the Copeland winners who have the best Minimax score computed ``globally'' with respect to the entire tournament $T$. For more on this local vs.~global distinction, which also induces a distinction between Copeland-Local-Borda and Copeland-Global-Borda (recall Remark~\ref{BordaNote}), see \citealt[\S~6.3]{HP2021}.
\end{remark}

The second axiom we will add in the context of Final Five elections is one of the original axioms used to motivate the proposed method in Final Four elections (see \citealt[\S~4.2]{Holliday2025}), namely Immunity to Spoilers (see \citealt{HP2023}). Here we state the axiom in a single-winner fashion, since we are already assuming Rare Ties and uniquely-weighted tournaments:
\begin{itemize}
\item \textbf{Immunity to Spoilers}: if $F(T_{-B})=\{A\}$, and $A$ beats $B$ head-to-head in $T$, then for all $C$ distinct from $A$ and $B$, we have $F(T)\neq \{C\}$.
\end{itemize}
Here $T_{-B}$ is the weighted tournament that results from deleting $B$ from $T$.\footnote{Formally, where $T=(\mathsf{C},m)$ and $B\in\mathsf{C}$, define $T_{-B}=(\mathsf{C}',m')$ where $\mathsf{C}'=\mathsf{C}\setminus \{B\}$ and $m'$ is the restriction of $m$ to $\mathsf{C}'\times\mathsf{C}'$.} The axiom says that if $A$ would win without $B$ in the election, and $A$ beats $B$ head-to-head when $B$ is included, then it should not happen that when $B$ is included, both $A$ and $B$ lose; for then $B$ is exactly the kind of \textit{spoiler} that makes voters worried about ``spoiler effects.'' As noted by Holliday and Pacuit \citeyearpar[\S~1.1]{HP2023}, the assumption that $A$ beats $B$ head-to-head is essential here, for of course if $A$ \textit{loses} head-to-head to $B$, then it is no wonder that $B$'s entrance into the election may cause $A$ to lose; and in any case no reasonable voting method can satisfy the strengthened version of the axiom that removes the assumption that $A$ beats~$B$.\footnote{Suppose that without $B$ in the election, $A$ beats $C$ and therefore wins in the two-candidate election with $\{A,C\}$. Further suppose that with $B$ included in the election, $B$ beats $A$ by a larger margin than that by which $A$ beats $C$. Indeed, suppose $C$ has the smallest loss (and, if one wishes, the largest win) of any of the three candidates. Then $C$ should win in the election with $\{A,B,C\}$, but this violates the strengthened version of the axiom that drops the assumption that $A$ beats $B$.}

Both Copeland and Minimax satisfy Immunity to Spoilers, since if $A$ has the most wins (resp.~smallest worst loss) without $B$ in the election, and $A$ beats $B$ head-to-head, then with $B$ included, if $B$ does not have the most wins (resp.~smallest worst loss), then $A$ still does, so some third candidate $C$ cannot suddenly win. Similarly, the proposed method and its generalizations as MWSL and CGM satisfy Immunity to Spoilers, since if $A$ has the smallest loss (resp.~smallest worst loss) among candidates with the most wins without $B$ in the election, then with $B$ included, if $B$ does not have the smallest loss (resp.~smallest worst loss) among candidates with the most wins, then $A$ still does, so some third candidate cannot suddenly win. By contrast, the axiom is violated by the variant method defined in Section~\ref{Prelim} and hence by the CLM method defined in Remark~\ref{LocalGlobal}, as shown in Figure \ref{ISViolation}: $E$ wins without $S$ in the election, and $E$ beats $S$ head-to-head, but with $S$ included in the election, both $E$ and $S$ lose and $N$ wins, so $S$ is a spoiler. Incidentally, this example also shows that the Borda-based refinements of Copeland from Remark \ref{BordaNote} violate Immunity to Spoilers: they select $E$ without $S$ in the election, but if we make $N$'s margin over $S$ sufficiently large compared to $E$'s margin over $S$, then the Borda-based methods will select $N$, so again $S$ is a spoiler.

\begin{figure}[h!]
\begin{center}
 \begin{minipage}{2.5in}
 \begin{center}\begin{tikzpicture} 

\node[circle,draw, minimum width=0.3in] at (0,0) (W) {$W$}; 
\node[circle,draw,minimum width=0.3in] at (4,0) (E) {$E$}; 
\node[circle,draw,minimum width=0.3in] at (2,2) (N) {$N$}; 

\path[->,draw, thick] (W) to node[fill=white] {\footnotesize medium} (N);
\path[->,draw, thick] (N) to node[fill=white]   {\footnotesize small} (E);
\path[->,draw, thick] (E) to node[fill=white]   {\footnotesize large}  (W);

\node at (2,-1) {three cycle}; 

\end{tikzpicture}\end{center}\end{minipage} \begin{minipage}{2.5in}
 \begin{center}\begin{tikzpicture} 

\node[circle,draw, minimum width=0.3in] at (0,0) (W) {$W$}; 
\node[circle,draw,minimum width=0.3in] at (4,0) (E) {$E$}; 
\node[circle,draw,minimum width=0.3in] at (2,2) (N) {$N$}; 
\node[circle,draw,minimum width=0.3in] at (2,-2) (S) {$S$}; 

\path[->,draw, thick] (W) to node[fill=white] {\footnotesize medium} (N);
\path[->,draw, thick] (N) to node[fill=white]   {\footnotesize small} (E);
\path[->,draw, thick] (E) to[pos=.74]  node[fill=white]   {\footnotesize large}  (W);
\path[->,draw, thick] (S) to node {} (W);
\path[->,draw, thick] (N) to node {} (S);
\path[->,draw, thick] (E) to node {} (S);

\node at (2,-3) {LS four cycle}; 

\end{tikzpicture}\end{center}\end{minipage}

\end{center}
\caption{A violation of Immunity to Spoilers by the variant method: on the left, the variant method selects $E$, but on the right, where we add a loser $S$ who loses head-to-head to $E$, the variant method changes to selecting $N$ (since now the Copeland winners are only $N$ and $E$, and $N$ beats $E$), so $S$ is a spoiler.}\label{ISViolation}
\end{figure}
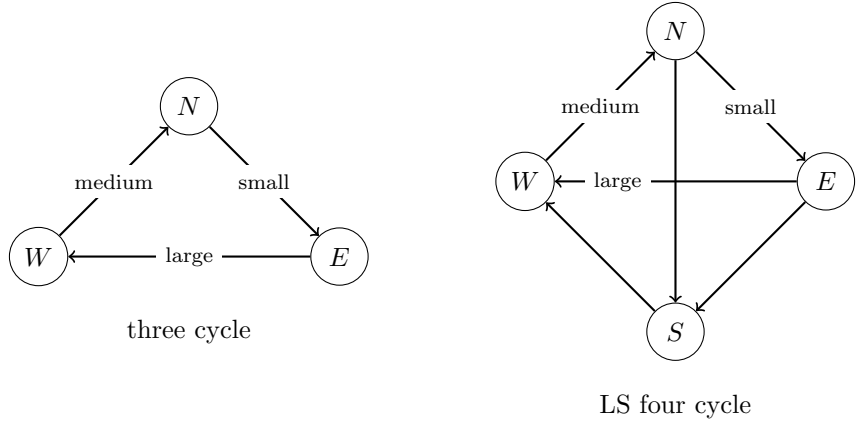

As an aside, it is noteworthy that with the help of Immunity to Spoilers, we can prove the Condorcet Criterion: the Condorcet winner must be the unique winner, whenever the Condorcet winner exists.

\begin{proposition}\label{CondorcetDerived} Let $F$ be a weighted tournament solution on the domain of all uniquely-weighted tournaments with a finite number of candidates. Assume that $F$ satisfies Immunity to Spoilers and Rare Ties. Further assume that in any two-candidate tournament, $F$ selects only the candidate who beats the other, following majority rule. Then $F$ satisfies the Condorcet Criterion for any number of candidates.\end{proposition}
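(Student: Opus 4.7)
The plan is to prove this by induction on the number of candidates $|\mathsf{C}|$. The base case $|\mathsf{C}|=2$ is immediate from the majority-rule assumption: if $A$ is the Condorcet winner in a two-candidate tournament, then $A$ beats the other candidate head-to-head, so by hypothesis $F$ selects only $A$.

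For the inductive step, I would fix $n\geq 2$, assume the Condorcet Criterion holds for every uniquely-weighted tournament on at most $n$ candidates, and let $T=(\mathsf{C},m)$ be a uniquely-weighted tournament with $|\mathsf{C}|=n+1$ and Condorcet winner $A$. By Rare Ties, $F(T)=\{X\}$ for a unique $X\in\mathsf{C}$. I want to show $X=A$, so suppose toward contradiction that $X\neq A$. The key observation is that for any $B\in\mathsf{C}\setminus\{A\}$, the restricted tournament $T_{-B}$ is still uniquely-weighted (deleting a candidate does not change the remaining margins, so distinctness is preserved) and has $n$ candidates, and $A$ remains the Condorcet winner in $T_{-B}$ since $A$ beats every other candidate in $T$. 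By the inductive hypothesis, $F(T_{-B})=\{A\}$. Since $A$ beats $B$ head-to-head in $T$, Immunity to Spoilers yields that $F(T)\neq\{C\}$ for every $C$ distinct from $A$ and $B$.

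Now I would exploit the freedom to choose $B$. Since $n+1\geq 3$, there exist two distinct candidates $B_1,B_2\in\mathsf{C}\setminus\{A\}$. Applying the conclusion of the previous paragraph with $B=B_1$ forces $X\in\{A,B_1\}$, and since $X\neq A$ we get $X=B_1$. Applying it with $B=B_2$ similarly forces $X=B_2$. This contradicts $B_1\neq B_2$, completing the induction.

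The main obstacle, though a minor one, is making sure the inductive hypothesis applies to $T_{-B}$: one must verify both that $T_{-B}$ remains uniquely-weighted and that $A$ remains the Condorcet winner there, so that induction delivers $F(T_{-B})=\{A\}$ cleanly and Immunity to Spoilers can then be invoked. Everything else is a careful bookkeeping of which candidate could possibly be the unique winner forced by Rare Ties.
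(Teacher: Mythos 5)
Your proof is correct and follows essentially the same route as the paper: induction on the number of candidates, with Rare Ties pinning down a unique winner and Immunity to Spoilers applied after deleting a candidate $B\neq A$ whom $A$ beats. The only cosmetic difference is that the paper chooses a single $B$ distinct from both $A$ and the putative winner $C$ to derive the contradiction in one step, whereas you run the argument for two choices $B_1,B_2$ and intersect the conclusions; the two are equivalent.
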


\begin{proof} We prove by induction on the number of candidates that in any uniquely-weighted tournament $T$, if $T$ has a Condorcet winner $A$, then $F(T)=\{A\}$. The case of two candidates is given by the majority rule assumption. Now suppose $T$ has more than two candidates, and $A$ is the Condorcet winner. Suppose for contradiction that $F(T)\neq\{A\}$, so by Rare Ties, $F(T)=\{C\}$ for some $C$ distinct from $A$. Since $T$ has more than two candidates, there is some $B$ distinct from both $A$ and $C$. Now in $T_{-B}$, $A$ is still the Condorcet winner, so by the inductive hypothesis, $F(T_{-B})=\{A\}$. Moreover, since $A$ is the Condorcet winner in $T$, $A$ beats $B$ head-to-head in $T$. Then by Immunity to Spoilers, $F(T)\neq \{C\}$, which contradicts what we derived above. Thus, we conclude that $F(T)=\{A\}$, which proves that $F$ satisfies the Condorcet Criterion.\end{proof}

An updated version of Table~\ref{AxTable} including the two new axioms of this section is given in Table~\ref{AxTable2}.

\begin{table}[h]
\begin{center}
\begin{tabular}{l|c|c|c|c|c}
Axiom & Copeland & Minimax & MWSL & CGM & CLM \\
\hline
Proximity to Condorcet  & $\checkmark$ & $\checkmark$ & $\checkmark$& $\checkmark$  & --  \\
Proximity to Copeland  & $\checkmark$ & -- & $\checkmark$& --  & --  \\
Immunity to Spoilers & $\checkmark$ & $\checkmark$ & $\checkmark$ &$\checkmark$& -- \\
Independence of Irrelevant Defeats & $\checkmark$ & $\checkmark$ & $\checkmark$ &$\checkmark$& --  \\
Win Monotonicity & $\checkmark$ & $\checkmark$ & $\checkmark$ &$\checkmark$& $\checkmark$    \\
Win Dominance & $\checkmark$ & -- & $\checkmark$&$\checkmark$& $\checkmark$ \\
Rare Ties & -- & $\checkmark$ & $\checkmark$&$\checkmark$& $\checkmark$  \\
\end{tabular}
\end{center}
\caption{Satisfaction ($\checkmark$) and violation (--) of axioms by voting methods. `MWSL' stands for Most Wins, Smallest Loss; `CGM' stands for Copeland-Global-Minimax; and `CLM' stands for Copeland-Local-Minimax.}\label{AxTable2}
\end{table}

Let us now prove that with these axioms, we can completely characterize MWSL for uniquely-weighted tournaments with up to five candidates.

\begin{theorem}\label{MainFive} Let $F$ be a weighted tournament solution for up to five candidates satisfying all the axioms from Section~\ref{Axioms} but with Proximity to Condorcet replaced by Proximity to Copeland, plus Immunity to Spoilers. Then in any uniquely-weighted tournament, $F$ selects the same winner as~MWSL.\end{theorem}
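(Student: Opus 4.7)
The plan is to mirror the proof of Theorem~\ref{MainFour}, first extracting preliminary consequences and then handling the 5-candidate cases by Copeland score sequence. To begin, Proximity to Copeland with $n=0$ and Rare Ties together yield majority rule on 2-candidate uniquely-weighted tournaments: if $A$ beats $B$, then $A$ is already the unique Copeland winner while $B$ is not, so $F(T)\neq\{B\}$ and hence $F(T)=\{A\}$. Combined with Immunity to Spoilers, Proposition~\ref{CondorcetDerived} then gives the full Condorcet Criterion for any number of candidates.

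For uniquely-weighted tournaments with at most four candidates, MWSL coincides with the proposed method from Section~\ref{FourSolutions}, so it suffices to reproduce the proof of Theorem~\ref{MainFour}, with each invocation of Proximity to Condorcet replaced by an invocation of Proximity to Copeland: in each case, the witness $A$ is shown to be closer than $B$ to being the Condorcet winner by a margin change that simultaneously makes $A$ closer than $B$ to being the unique Copeland winner, so the same margin-change bound witnesses the Proximity to Copeland premise.

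Now suppose $T$ is a 5-candidate uniquely-weighted tournament with MWSL winner $A$. If $T$ has a Condorcet winner, apply the Condorcet Criterion. If $T$ has a unique Copeland winner without a Condorcet winner (the score sequence $(3,2,2,2,1)$), apply Proximity to Copeland with $n=0$: $A$ is already the unique Copeland winner, while no other candidate can be made so by increasing every one of their margins by $0$, so Rare Ties forces $F(T)=\{A\}$. Otherwise several candidates tie for the most wins, and $A$ is the tied-top Copeland winner with the smallest loss $m_A$. Applying Proximity to Copeland with $n=m_A+1$ rules out every other Copeland winner $B$: flipping $A$'s smallest loss gives $A$ strictly more wins than anyone else, whereas uniformly boosting $B$'s margins by $n$ flips none of $B$'s losses (since $m_B\geq m_A+1=n$), leaving $B$ still tied with $A$. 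This single argument settles the symmetric pentagram (Figure~\ref{Penta}), where all five candidates are Copeland winners.

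The main obstacle is ruling out the non-Copeland-winners in the remaining multi-tied cases, corresponding to the Copeland score sequences $(3,3,3,1,0)$, $(3,3,2,2,0)$, and $(3,3,2,1,1)$. For these I would use Win Dominance to eliminate any Condorcet loser (every other candidate dominates a zero-win candidate in wins vacuously), and Immunity to Spoilers to eliminate each remaining non-Copeland-winner $Y$: find $X\ne A,Y$ that $A$ beats head-to-head in $T$ and such that $A$ is still the MWSL winner of the 4-candidate tournament $T_{-X}$; the already-established 4-candidate result then gives $F(T_{-X})=\{A\}$, and Immunity to Spoilers yields $F(T)\ne\{Y\}$. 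The delicate point is choosing $X$ so that $A$ remains the MWSL winner in $T_{-X}$; this requires going through the finitely many structurally distinct tournaments with each of the above score sequences and verifying, in each, that an appropriate low-scoring candidate $A$ beats (typically a Condorcet loser or one-win candidate) provides such an $X$, possibly with different deletions needed to rule out different $Y$'s. Combined with Rare Ties, these applications leave $\{A\}$ as the only possible value of $F(T)$.
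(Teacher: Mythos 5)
Your overall architecture tracks the paper's: Rare Ties, the four-candidate reduction with Proximity to Copeland substituted for Proximity to Condorcet, the $n=0$ case for a unique Copeland winner, and a case analysis over the five tied-Copeland structures using Immunity to Spoilers to fall back on the four-candidate result. But there is a genuine gap in the step where you eliminate the \emph{other Copeland winners} via Proximity to Copeland with $n=m_A+1$. Your parenthetical only shows that the uniform boost gives $B$ no new wins; it does not show that $B$ is left ``still tied with $A$,'' because boosting every one of $B$'s margins also lowers the \emph{other candidates'} margins against $B$. Concretely, in the top four-cycle of Figure~\ref{NoUniqueCopeland}, let $A$ and $B$ be the two 3-win candidates with $A\to B$, let $A$'s unique loss (of size $m_A$) be to a 2-win candidate, and suppose $m(A,B)=m_A+1$ (margins need not share a parity, so this is a legitimate uniquely-weighted tournament; e.g., $m_A=2$, $m(A,B)=3$). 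Boosting all of $B$'s margins by $n=m_A+1$ turns $A$'s win over $B$ into a tie, so $A$ drops to two wins and $B$ \emph{becomes} the unique Copeland winner; for any larger $n$, $B$ becomes the Condorcet winner; and for smaller $n$, $A$ cannot be made the unique Copeland winner by a single boost. Hence no $n$ witnesses the Proximity to Copeland premise against $B$, and your proposal offers no other route to exclude $B$. The same boundary configuration arises in the mid-cycle order and the gyroscope. The paper never uses Proximity to Copeland for this purpose: it deletes the minimum-score candidate $Y$, checks that $A$ wins the resulting four-candidate tournament, and applies Immunity to Spoilers (since $A\to Y$) to conclude that only $A$ or $Y$ can win in $T$ --- which kills the second Copeland winner and the low scorers in one stroke --- and then disposes of $Y$ by Win Dominance, or, in the mid-cycle order and gyroscope where $Y$ need not be dominated in wins, by a Win Dominance plus Independence of Irrelevant Defeats argument.

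Two further points. First, each application of Immunity to Spoilers with deleted candidate $X$ leaves both $A$ and $X$ alive, so your plan to eliminate ``each remaining non-Copeland-winner'' needs either a second deletion (two deletions with different $X$'s intersect to $\{A\}$ under Rare Ties) or a separate argument against the deleted candidate, as the paper gives for $Y$; as written, this part is an unexecuted case-check. Second, the same parity-free boundary case also threatens the wholesale replacement of Proximity to Condorcet by Proximity to Copeland in the four-candidate reduction (e.g., the SL four cycle with $m(N,E)=m(W,N)+1$, where the uniform boost of $E$'s margins demotes $N$ and leaves $E$ the unique Copeland winner). The paper asserts this replacement without further argument, so you are in the same position there, but if you lean on the four-candidate result inside the Immunity to Spoilers step, that case deserves explicit attention.
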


\begin{proof} As before, if $T$ is uniquely-weighted, then \textbf{Rare Ties} implies that $F$ selects a unique winner in $T$. 

The case of four or fewer candidates is covered by Theorem \ref{MainFour}, since in the proof of that theorem, all applications of Proximity to Condorcet can be replaced by applications of \textbf{Proximity to Copeland}: Proximity to Condorcet is only used to show that (i) a Condorcet winner must be the winner, when one exists, and (ii) when two candidates both have exactly two wins and one loss, then the one with the larger loss cannot be the winner, and  \textbf{Proximity to Copeland} also delivers these results. 

 For five candidates, if there is a unique Copeland winner, then $F$ must select the unique Copeland winner by the $n=0$ case of \textbf{Proximity to Copeland}. There are only five tournaments (up to isomorphism) without a unique Copeland winner (see, e.g., Box VIII and  Table 3 of \citealt{HT2020} or the Appendix of \citealt{Moon1968}), shown in Figures \ref{Penta} and~\ref{NoUniqueCopeland}. In any uniquely-weighted version of the pentagram in Figure \ref{Penta} in which every candidate has two losses, the candidate with the smallest loss wins by \textbf{Proximity to Copeland}. 

For any uniquely-weighted tournament $T$ whose underlying tournament falls into one of the classes in Figure \ref{NoUniqueCopeland}, the argument is as follows. Let $A$ be whichever of the gray candidates (the Copeland winners) has the smallest loss among gray candidates, so MWSL selects $A$. If we remove $Y$, then $T_{-Y}$ is a top cycle (if $T$ is the top top-cycle) or four cycle (in the other three cases) in which $A$ is the winner by the proof of Theorem~\ref{MainFour}. Then since $A$ beats $Y$ head-to-head, it follows by \textbf{Immunity to Spoilers} that either $A$ still wins in $T$, in agreement with MWSL, or $Y$ wins. But $Y$ does not win in the top top-cycle or the top four-cycle by \textbf{Win Dominance}, since every other candidate dominates $Y$ in wins. So $A$ wins in the top top-cycle and the top four-cycle. It remains to show that $Y$ does not win in the mid-cycle order or the gyroscope.

In the mid-cycle order, suppose for contradiction that $Y$ wins. Then strengthen the margin of $Z$ vs.~$V$ so that it is greater than any other margin (thereby preserving the uniquely-weighted property), including that of $Y$ vs.~$V$. In the new $T'$ obtained in this way, $Z$ dominates $Y$ in wins, since the only candidate $Y$ beats is $V$. Hence $Y$ cannot win by \textbf{Win Dominance}, so $A$ must win by our reasoning using \textbf{Immunity to Spoilers} above (since by Theorem \ref{MainFour}, $A$ is the winner in $T'_{-Y}$, just as in $T_{-Y}$); but this contradicts \textbf{Independence of Irrelevant Defeats}, since the margin of $Z$ vs.~$V$ does not involve $A$ or $Y$. Thus, we conclude that $Y$ does not win in our original $T$, so $A$ does by our reasoning using \textbf{Immunity to Spoilers} above.

In the gyroscope, let $Z$ be the gray candidate who is not $A$. Then we can prove that $Y$ does not win exactly as in the previous paragraph except that $T$ is the gyroscope instead of the mid-cycle order.\end{proof}

\begin{figure}[h!]
\begin{center}
\begin{minipage}{2.5in}\begin{center}\begin{tikzpicture}
\node[rectangle,draw, minimum width = 4cm, minimum height = 2.5cm] at (1.5,.75) (rect1) {};
\node[rectangle,draw, minimum width = 4.25cm, minimum height = 4.25cm] at (1.5,0) (rect2) {};

\node[circle,draw,fill=gray!75] at (0,0)      (a) {$\quad\,$}; 
\node[circle,draw,fill=gray!75 ] at (3,0)      (c) {$\quad\,$}; 
\node[circle,draw,fill=gray!75 ] at (1.5,1.5)  (b) {$\quad\,$}; 
\node[circle,draw ] at (1.5,-1.5) (d) {$\quad\,$};
\node[circle,draw,fill=red!50 ] at (1.5,-3) (e) {$Y$};

\path[->,draw,thick] (a) to  (b);
\path[->,draw,thick] (b) to  (c);
\path[->,draw,thick] (c) to (a);
\path[->,draw,thick] (rect1) to  (d);
\path[->,draw,thick] (rect2) to  (e);

\end{tikzpicture}\\
top top-cycle ($T_4$)\\
\end{center}\vspace{.25in}\end{minipage}\begin{minipage}{2.5in}\begin{center}\begin{tikzpicture}
\node[circle,draw,fill=gray!75 ] at (0,-1.5)      (a) {$\quad\,$}; 
\node[circle,draw ] at (3,-1.5)      (c) {$\quad\,$}; 

\node[circle,draw,fill=gray!75 ] at (1.5,1.5) (d) {$\quad\,$};
\node[circle,draw ] at (1.5,.5) (b) {$\quad\,$};

\node[rectangle,draw, minimum width = 1.25cm, minimum height = 2cm] at (1.5,1) (rect1) {};

\path[<-,draw,thick] (a) to (c);
\path[<-,draw,thick] (c) to  (rect1);
\path[<-,draw,thick] (rect1) to  (a);
\path[->,draw,thick] (d) to (b);

\node[circle,draw,fill=red!50 ] at (1.5,-3) (e) {$Y$};
\node[rectangle,draw, minimum width = 4.25cm, minimum height = 4.25cm] at (1.5,0) (rect2) {};
\path[->,draw,thick] (rect2) to  (e);

\end{tikzpicture}\\
top four-cycle ($T_6$)\\
\end{center}\vspace{.25in}
\end{minipage}

\begin{minipage}{2.5in}\begin{center}\begin{tikzpicture}
\node[circle,draw,fill=gray!75 ] at (1.5,.5) (d) {$X$};
\node[circle,draw ] at (1.5,-1)      (c) {$Z$}; 
\node[circle,draw,fill=red!50  ] at (1.5,-2.5)      (a) {$Y$};

\node[circle,draw ] at (3,.5)  (b) {$V$}; 
\node[circle,draw,fill=gray!75 ] at (4.5,-1) (e) {$\quad\,$};

\node[rectangle,draw, minimum width = 1.25cm, minimum height = 4.25cm] at (1.5,-1) (rect1) {};
\node[rectangle,draw, minimum width = 1cm, minimum height = 2.5cm] at (1.5,-.25) (rect2) {};

\path[->,draw,thick] (rect2) to  (a);
\path[->,draw,thick] (d) to  (c);

\path[<-,draw,thick] (b) to  (rect1);
\path[<-,draw,thick] (rect1) to  (e);
\path[<-,draw,thick] (e) to  (b);

\end{tikzpicture}\\
mid-cycle order ($T_7$)\\
\end{center}
\end{minipage}\begin{minipage}{2.5in}
\begin{center}\begin{tikzpicture}
\node[circle,draw,fill=gray!75 ] at (0,0)      (a) {$X$}; 
\node[circle,draw,fill=gray!75 ] at (4,0)      (c) {$\quad\,$}; 
\node[circle,draw,fill=red!50 ] at (2,2)  (b) {$Y$}; 
\node[circle,draw] at (2,-2) (d) {$V$};
\node[circle,draw ] at (1.333,.666) (e) {$\quad\,$};

\path[->,draw,thick,bend left] (a) to  (b);
\path[<-,draw,thick,bend left] (b) to  (c);
\path[->,draw,thick,bend left] (c) to (d);
\path[<-,draw,thick,bend left] (d) to (a);
\path[<-,draw,thick,bend right] (a) to  (c);
\path[->,draw,thick,bend left] (b) to  (d);

\path[<-,draw,thick,bend right=20] (e) to  (a);
\path[->,draw,thick,bend left=20] (e) to  (b);
\path[<-,draw,thick,bend right=20] (e) to  (d);
\path[<-,draw,thick,bend right=20] (c) to  (e);

\end{tikzpicture}\\
gyroscope ($T_8$)\\
\end{center}
\end{minipage}

\end{center}
\caption{Tournaments for five candidates without a unique Copeland winner, in addition to the one in Figure~\ref{Penta}. An arrow from a candidate $C$ to a rectangle indicates that there are arrows from $C$ to \textit{every candidate} in the rectangle. Similarly, an arrow from a rectangle to a candidate $C$ indicates that there are arrows from \textit{every candidate} in the rectangle to $C$. The parenthetical names for tournaments match those in \citealt{HT2020}.}\label{NoUniqueCopeland}
\end{figure}

Here we make no claim of independence like Proposition~\ref{IndProp}, since there may be ways to replace the applications of Immunity to Spoilers in the proof of Theorem~\ref{MainFive} with applications of other axioms, along the lines of the proof of Theorem~\ref{MainFour}.\footnote{For example, consider the top top-cycle in Figure~\ref{NoUniqueCopeland}. $Y$ cannot win by Win Dominance. But neither can the candidate immediately above $Y$ in the diagram in Figure~\ref{NoUniqueCopeland}, whom we will call $Z$. Suppose for contradiction that $Z$ does win in $T$. Then let $B$ be the gray candidate with the greatest loss, and increase $B$'s margin vs.~$Y$ so that it is larger than $Z$'s margin vs.~$Y$, while maintaining the uniquely-weighted property. Then $Z$ cannot win in the new $T'$ by Win Dominance. But also, by Proximity to Copeland,  $B$ cannot win in $T'$, since the gray candidate with the smallest loss is closer to being the unique Copeland winner than $B$ is. Hence some third candidate must win after we increase $B$'s margin vs.~$Y$. But this contradicts Independence of Irrelevant Defeats. Thus, $Z$ cannot win in $T$. It follows by Proximity to Copeland that the only candidate remaining who can win in $T$ is the gray candidate with the smallest loss. In this way, we can replace the application of Immunity to Spoilers in the part of the proof of Theorem \ref{MainFive} concerning the top top-cycle with applications of the other axioms. Whether this is possible in all cases of the proof is a question we leave open.} But as Immunity to Spoilers was one of the original motivating axioms for MWSL in the four-candidate case (see \citealt{Holliday2025}), it seems reasonable from a normative point of view to use it to reduce much of the five-candidate case to the four-candidate case. We leave for another occasion the logical question of whether Immunity to Spoilers is independent of the other axioms, as well as the problem of characterizing MWSL in six-candidate elections and beyond.

\section{Conclusion}\label{Conclusion}

We have provided an axiomatic characterization of the simple Condorcet voting method for Final Four elections from \citealt{Holliday2025}, as well as an axiomatic characterization of a new generalization of that method to Final Five elections. These are important steps toward understanding the normative principles that might help justify the use of these methods---alongside considerations of their simplicity---in Final Four or Final Five political elections. But they are far from the end of the story. Voting methods can be assessed not only in terms of the axioms they satisfy or violate but also in terms of their probability of violating axioms, the representativeness of their winners, the extent to which they mitigate incentives for strategic voting, and more. A full case for the use of Condorcet methods, including the simple Condorcet method studied here, must take into account all of these approaches to the normative assessment of methods of voting.

\subsection*{Acknowledgments}

I am grateful for discussions with Ned Foley that motivated me to address the differences between the proposed method and the variant method and for discussions with Robbie Robinette that motivated me to address generalizations of the proposed method to Final Five elections. I thank Milan Moss\'{e}, Eric Pacuit, and Nic Tideman for helpful comments on this paper.

\bibliographystyle{plainnat}
\bibliography{social}

\end{document}